\title{Safe Subjoins in Acyclic Joins} 
\titlerunning{Safe Subjoins in Acyclic Joins} 
\author{Foto N. Afrati}{National Technical University of Athens,  Greece}{afrati@gmail.com}{[orcid]}{}
\authorrunning{Foto N. Afrati} 
\keywords{acyclic joins, acyclic hypergraphs, semijoins} 
\begin{document}

\maketitle

\begin{abstract}
It is expensive to compute joins, often due to large intermediate relations. For acyclic joins, monotone join expressions are guaranteed to produce intermediate relations not larger than the size of the output of the join  when it is computed on a fully reduced database. 
Any subexpression of an acyclic join does not offer this guarantee, as it is easy to prove. 
In this paper, we consider joins with projections too and we ask the question whether we can characterize join subexpressions that produce, on every fully reduced database, an output  without dangling tuples (which translates, in the case of joins without projections, to an output of size not larger than the size of the output of the join). We call such a subexpression a safe subjoin. Surprisingly, we  prove that there is a simple characterization which is the following: A subjoin is safe if and only if there is a parse tree of the join (a.k.a. join tree)  such that the relations in the subjoin form a subtree of it. We provide an algorithm that finds  such a parse tree, if there is one.
\end{abstract}

\section{Introduction}

Computing a join efficiently is one of the fundamental problems in database systems.
Acyclic joins \cite{Fagin83}  have been extensively investigated and their properties  enable optimizations in classical well studied problems but also in various modern contexts, such as machine  learning  (\cite{SchleichOK0N19}, \cite{SchleichOC16}).
Relatively recent work includes the development of I/O optimal algorithms for acyclic joins \cite{XiaKe16} \cite{PaghP06}.  These works  assume that the relations are fully reduced and this is our assumption too here.

In many cases, when computing joins, it is critical to study and decide 
the join ordering problem
(e.g., see \cite{MoerkotteN06}, \cite{Trummer017}).
When we have an acyclic join, we know that with a certain polynomial time preprocessing which derives a fully reduced database instance, there is a certain order of computing the join that guarantees sizes of intermediate relations to be smaller than the size of the output of the join. However, the optimal order of computing an acyclic join is not known. E.g., when can we push larger relations to join in the end of the join process, without compromising the property that sizes of intermediate relations are smaller than the size of the output of the join? This depends on properties of subjoins of an acyclic join. In that respect we study here the following problem:

\begin{itemize}
\item 
 When a subjoin of an acyclic join is guaranteed not to compute dangling tuples over a fully reduced database instance?
\end{itemize}

A dangling tuple is a tuple of a relation or of a subjoin which is not used in the computation of the join, i.e., if deleted, the output of the join will be the same.
Interestingly we give a complete characterization of such subjoins. 
We illustrate the problem on an example:

\begin{example}
\label{ex-begin}

We consider the join $J=ABC\bowtie AB \bowtie AC \bowtie BC$. This is an acyclic join.
We consider  subjoin, $J_S= AB \bowtie AC \bowtie BC$,  that includes only the last three relations.
This subjoin has an undesirable property. We will explain on the following database instance $D$:\\
The relation $r_0=ABC$  has the tuples $\{  (a,b,c_1), (a,b_1,c), (a_1,b,c)   \}$.\\
The relation $r_1=AB$ has the tuples $\{ (a,b), (a,b_1), (a_1,b) \}$.\\
The relation $r_2=AC$ has the tuples $\{ (a,c),(a,c_1), (a_1,c) \}$.\\
The relation $r_3=BC$ has the tuples $\{ (b,c),(b,c_1), (b_1,c) \}$.

Database $D$ is fully reduced, i.e., there are no dangling tuples in $D$.

Now it is easy to observe that the output of the $J_S= AB \bowtie AC \bowtie BC$ contains 4 tuples, while the output of the whole join contains 3 tuples. The tuple $(a,b,c)$ computed in the output of the subjoin is a dangling tuple, i.e, it is not used in the computation of the  join.
\end{example}

A byproduct of the techniques developed in this paper is presented in Section~\ref{sec-min}. It is 
work towards characterizing subjoins that contain the minimum number of subsubjoins that can be processed without  each of it producing dangling tuples.


\subsection{Problem Definition}

Let $J$ be an acyclic join and let $J_S$ be a join, called a {\em subjoin} of $J$ here on, which results from $J$ after deleting some relations with properly chosen attributes to be projected to appear in the output as follows: these are a) the attributes that are projected in the output of $J$ and belong to some relation in the subjoin  and b) the boundary attributes. The {\em boundary attributes} are the ones that belong to both a) relations of the subjoin and to b) relations  of $J$ that are not in the subjoin. We conveniently define the complement $J^c_S$, of the subjoin $J_S$  to be the subjoin of $J$ which uses the relations that are not in $J_S$. An example is presented in 
Appendix \ref{sec-projection}.

We say that a relation $r$ has no dangling tuples with respect to a relation $r'$ if every tuple in $r$ joins with a tuple in $r'$ to produce a tuple in the output of $r\bowtie r'$.

We call the subjoin $J_S$
{\em safe} if the following is true. For every fully reduced (i.e.,  consistent) database D, the output $J_S(D)$ of $J_S$ computed on D has the property that $J_S(D)$ has no dangling tuples with respect to $J^c_S(D)$.

When the join has no projections (i.e., all attributes appear in the output), then  the following is also true for a safe subjoin: Every tuple
$t_S$ in $J_S(D)$ is such that there is a tuple $t$ in $J(D)$ such that $t_S = t[A_S]$ where
$t[A_S]$ is the projection of $t$ on the attributes in $A_S$, where   $A_S$ is  the set of attributes that appear in $J_S$.
When the set of attributes $A_S$ is evident from the context we use the term {\em subtuple} of $t$ to refer to $t[A_S]$.

The problem was introduced by Christopher R\'e  \cite{Resafe}.  Example~\ref{ex-acyclic-non-safe}, in Appendix, shows that a nonsafe subjoin can be acyclic. In the following subsection we break down the proof.
\vspace*{-.1cm}
\subsection{Components of the proof}
\label{subsec-components}
The structure of an acyclic join is given by a parse tree (a.k.a. join tree). 
We use parse trees to characterize safe subjoins.
We prove that  a subjoin is safe if and only if there is a parse tree of the join such that the relations in the subjoin form a partial subtree of it. 

The proof procedure considers an arbitrary parse tree of the join and either transforms it into a parse where the subjoin forms a single 
partial subtree or it builds a counterexample database to prove that the subjoin is not safe. 
More specifically, 
given an acyclic join $J$ and a subjoin $J_S$  we consider two cases depending on whether the following is true or not:

{\bf Property:} There is a relation $r$ that does not belong to the subjoin such that there is no relation $r'$ that belongs to the subjoin for which the following happens: $r\cap r'$ contains all the attributes of $r$ that appear in at least one of the relations of the subjoin.

Thus the two main blocks of the proof are the following:
\begin{itemize}
\item
Case 1. When  the above property is true. Then $J_S$ is not safe.
This  is proven in Section  \ref{sec-5}; a  counterexample database is built using tuple generating dependencies and chase. 

\item
Case 2. When  the above property is false. 

Then, given a parse tree $T$ of $J$ with at least two disconnected parts of $J_S$, either there is parse tree of $J_S$ where the number of disconnected parts is less that the one in $T$ or  $J_S$  is not safe.
This is proven in Sections  \ref{sec-counter-shared-sec} and \ref{sec-warmup}  using the result of
Subsection \ref{subsec-reverse}. 

Most of the insight of the proof of this second case   can be obtained by considering the simplest subjoin that is  partitioned into two disconnected parts  and this is what is presented in Section  \ref{sec-warmup}.  The disconnected parts mentioned above are defined formally as maximal subtrees in Subsection \ref{sec-23}.
\end{itemize}

A result of independent interest is the reverse path transformation in  Subsection \ref{subsec-reverse} which transforms a parse tree of the acyclic join hypergraph to another parse tree of it. 

%
%
%
%
%
%

\section{Preliminary Definitions and Technical Tools}

\subsection{Preliminaries}

This subsection contains definitions and results from the literature. For more details see, e.g., \cite{abiteboulbook,GottlobLS01,Ullman01-1,Ullman01-12,JEFW}.


We define a hypergraph $G$  as a pair $ (V,E)$  where:
$ V$  is a finite set of vertices and 
$ E$  is a set of hyperedges, each hyperedge being a nonempty subset of $ V$ .
We will refer to hyperedges as edges henceforth.
A {\em hypergraph of a join} has vertices that correspond to its attributes and there is an hyperedge (hereon, referred to as, simply, edge) joining a subset of the attributes (hereon, we will refer to the vertices of a hypergraph as attributes) if there is a relation in the join which contains exactly this subset of attributes. 
We compute a join $J$ on a database $D$ by assigning values to the attributes of $J$ such that the tuples that are obtained by this assignment belong to the corresponding relations in $D$.

\begin{definition}
\label{dfn-acyclicc}
A join is {\em acyclic} if there is a tree with nodes representing the edges (the relations, respectively) of the hypergraph (the join, respectively) where the following is true:
 For each attributes $A$, all the nodes of the tree where $A$ appears are connected.  We call such a tree a {\em parse tree} 
(or join tree). 
\end{definition}


There is a lot of  early and recent work on acyclic joins, e.g.,
\cite{YuLSO79,Graham79,TarjanY84,WangZA000Z21,WangY20}.
An example of a parse tree is in the Appendix \ref{app-ex}.   
On a parse tree, the {\em depth} of a node is its distance from the root of the parse tree. Also, when we refer to a {\em subtree rooted} at a certain node $u$ we mean the subtree that is equal to the set of all descendants of $u$ in the parse tree. 
In the rest of the paper we will refer to relations of a join, edges of its hypergraph and nodes of a parse tree of the join interchangably, thus a node of a parse tree represents also a set of attributes. 
%
%
%
%
%

\begin{definition}
A database instance, $D$, is {\em  consistent for $J$} or simply {\em consistent}  (if $J$ is obvious) if every relation instance in $D$
is the projection of the output of $J$ applied on $D$.

$D$ is {\em pairwise consistent  for $J$} or simply {\em pairwise consistent}  if every pair of relations, $r_i, r_j$, in $J$ that share at least one attribute are
consistent, i.e., each  relation instance in the pair
is the projection of $r_i \bowtie r_j$ applied on $D$.
\end{definition}

%

In Section \ref{sec-semijoin} in the Appendix, we include a short presentation of the role of semijoins in producing a fully reduced database.

\begin{definition}
A {\em path} from a vertex $u$ to a vertex $v$ is a sequence of $k$ edges $E_1, \ldots, E_k$ such that
$u$ is in $E_1$ and $v$ is in $E_k$ and for each $i=1,\ldots, k-1$, the intersection of
$E_i$ with $E_{i+1}$ is nonempty. We also say that the above sequence is a path from edge $E_1$ to edge $E_k$.
\end{definition}

\begin{definition}
Two  {\em vertices are connected} if there is a path from one to the other. Similarly, two  {\em edges are connected} if there is a path from one to the other. A  {\em set of vertices (or a set of edges) is connected} if there is a path joining every pair of vertices (or edges) in the set.

The  {\em connected components} of a hypergraph are the maximal connected sets of edges.
\end{definition}



\begin{definition}
Let $N_1$ be a subset of the vertices of a hypergraph. The set of {\em partial edges generated by $N_1$} is the set of edges obtained by intersecting each edge with $N_1$. 
\end{definition}

\subsection{Technical tool: Reverse Path Transformation}
\label{subsec-reverse}

Our first contribution is 
Lemma \ref{stem-lemma} which  is one of the main tools and is of independent interest. It provides the necessary condition for a certain transformation on a parse tree of an acyclic join.  

Let $T$ be a parse tree of an acyclic join hypergraph. 
We say that a path $p$ satisfies the {\em shared-attributes condition} if $p=a_1, a_2, \ldots ,a_n$ where  $a_i$ is the parent of $a_{i+1}$ in T, $i=1,2,\ldots , n-1$ and 
$P_1\cap a_1=P_1\cap a_2 = \cdots = P_1\cap a_n$,
where $P_1$ is the parent of $a_1$.

\begin{lemma}
\label{stem-lemma}

Let $T$ be a parse tree of
an acyclic join $J$. Suppose path $p=a_1, a_2, \ldots ,a_n$ satisfies the shared-attributes condition.

Let $T_n$ be the subtree rooted at $a_n $.
Let $T_i, i=1, 2, \ldots, n-1$, be the subtree rooted
at $a_i$ after removing its child $a_{i+1}$ together with
the subtree rooted at $a_{i+1}$.

Then, there is another parse tree $T'$ of join $J$ such that the parent of $a_n$ is $P_1$
the parent of $a_{n-1}$ is $a_n$, the parent of $a_i $ is $a_{i-1}$ and, the sub-tree $T_i $ is rooted at $a_i$, $i=1,\ldots,n$.
See Figure \ref{fig:diagram:wps} for an illustration.
\end{lemma}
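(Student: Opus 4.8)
The plan is to work with $T$ and $T'$ as \emph{undirected} trees and to observe that the reversal affects only a single edge. Reversing the orientation of the path $a_1,\dots,a_n$ does not change the unordered path edges $\{a_i,a_{i+1}\}$, and each hanging subtree $T_i$ keeps both its internal edges and its attachment to $a_i$. Hence, as undirected graphs, $T$ and $T'$ have the same vertex set and differ in exactly one edge: the edge $\{P_1,a_1\}$ present in $T$ is replaced by the edge $\{P_1,a_n\}$ in $T'$. I would record this reduction first, since it collapses the whole transformation into a single edge swap.

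Next I would check that $T'$ is a tree. Deleting $\{P_1,a_1\}$ from $T$ splits it into the subtree $U$ rooted at $a_1$ (whose vertex set is $T_1\cup\cdots\cup T_n$) and the remaining part $R$ containing $P_1$. Since $a_n\in U$ and $P_1\in R$, adding $\{P_1,a_n\}$ reconnects the two pieces, so $T'$ is connected and has $|V|-1$ edges, hence is a tree.

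The substantive step is verifying the running-intersection (connectedness) property of Definition~\ref{dfn-acyclicc} for $T'$. Fix an attribute $A$ and let $V_A$ be the set of nodes containing $A$; by hypothesis $V_A$ induces a connected subtree of $T$. If $V_A\subseteq R$ or $V_A\subseteq U$, then the connecting edges of $V_A$ all lie inside $R$ or inside $U$ respectively, and these edge sets are identical in $T$ and $T'$, so $V_A$ stays connected. The remaining case is when $V_A$ meets both $U$ and $R$. Here I would use that $\{P_1,a_1\}$ is the unique edge of $T$ joining $U$ to $R$: the connected subtree $V_A$ must use it, forcing $P_1,a_1\in V_A$, i.e.\ $A\in P_1\cap a_1$. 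This is exactly where the shared-attributes condition enters: writing $S:=P_1\cap a_1=P_1\cap a_n$, we get $A\in S\subseteq a_n$, so $a_n\in V_A$. Deleting the cut edge from the tree $V_A$ leaves the two connected pieces $V_A\cap U$ and $V_A\cap R$, joined via edges unchanged in $T'$, and in $T'$ the new edge $\{P_1,a_n\}$ — both of whose endpoints now lie in $V_A$ — reconnects them. Thus $V_A$ is connected in $T'$.

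I expect the only real obstacle to be this mixed case of the running-intersection check, and within it the single point where the shared-attributes condition is indispensable: without $P_1\cap a_1=P_1\cap a_n$ there would be no guarantee that $a_n\in V_A$, and the reattachment at $a_n$ could break connectivity for some attribute. Everything else is structural bookkeeping about which edges survive the transformation.
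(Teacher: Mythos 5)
Your proof is correct and takes essentially the same approach as the paper's: both reduce the transformation to the single edge swap of deleting $(P_1,a_1)$ and adding $(P_1,a_n)$, and both invoke the shared-attributes condition $P_1\cap a_1=\cdots=P_1\cap a_n$ to show that the only attributes at risk of violating Definition~\ref{dfn-acyclicc} --- those in $P_1\cap a_1$ --- remain connected via the new edge at $a_n$. Your write-up merely makes explicit the treeness check and the attribute-by-attribute case analysis that the paper's terser argument leaves implicit.
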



\begin{figure}

\begin{center}\begin{tabular}{l l}

\includegraphics[height=7cm, angle=0, keepaspectratio=true,clip]{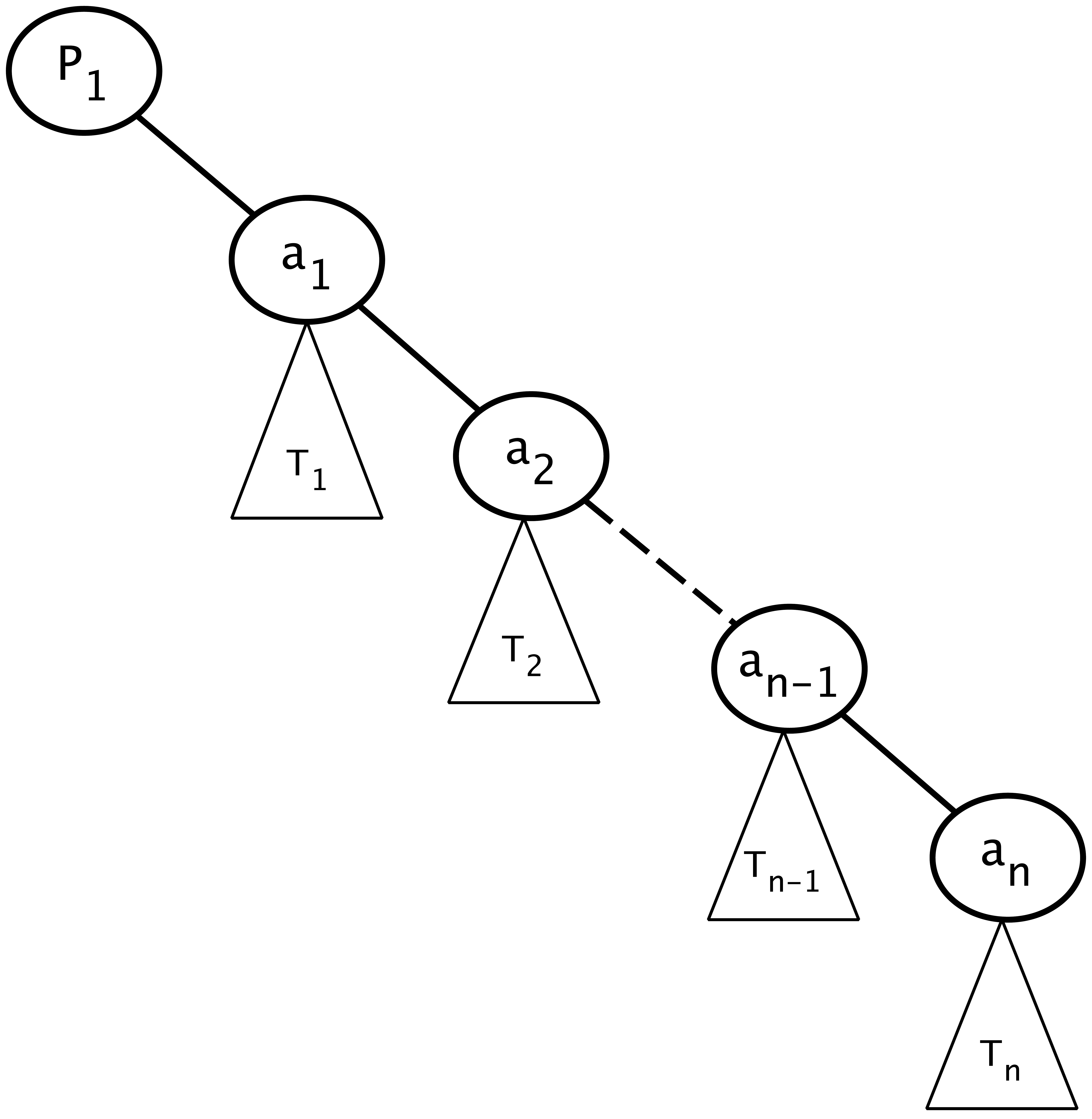}


%
%
%
%


&

\includegraphics[height=7cm, angle=0, keepaspectratio=true,clip]{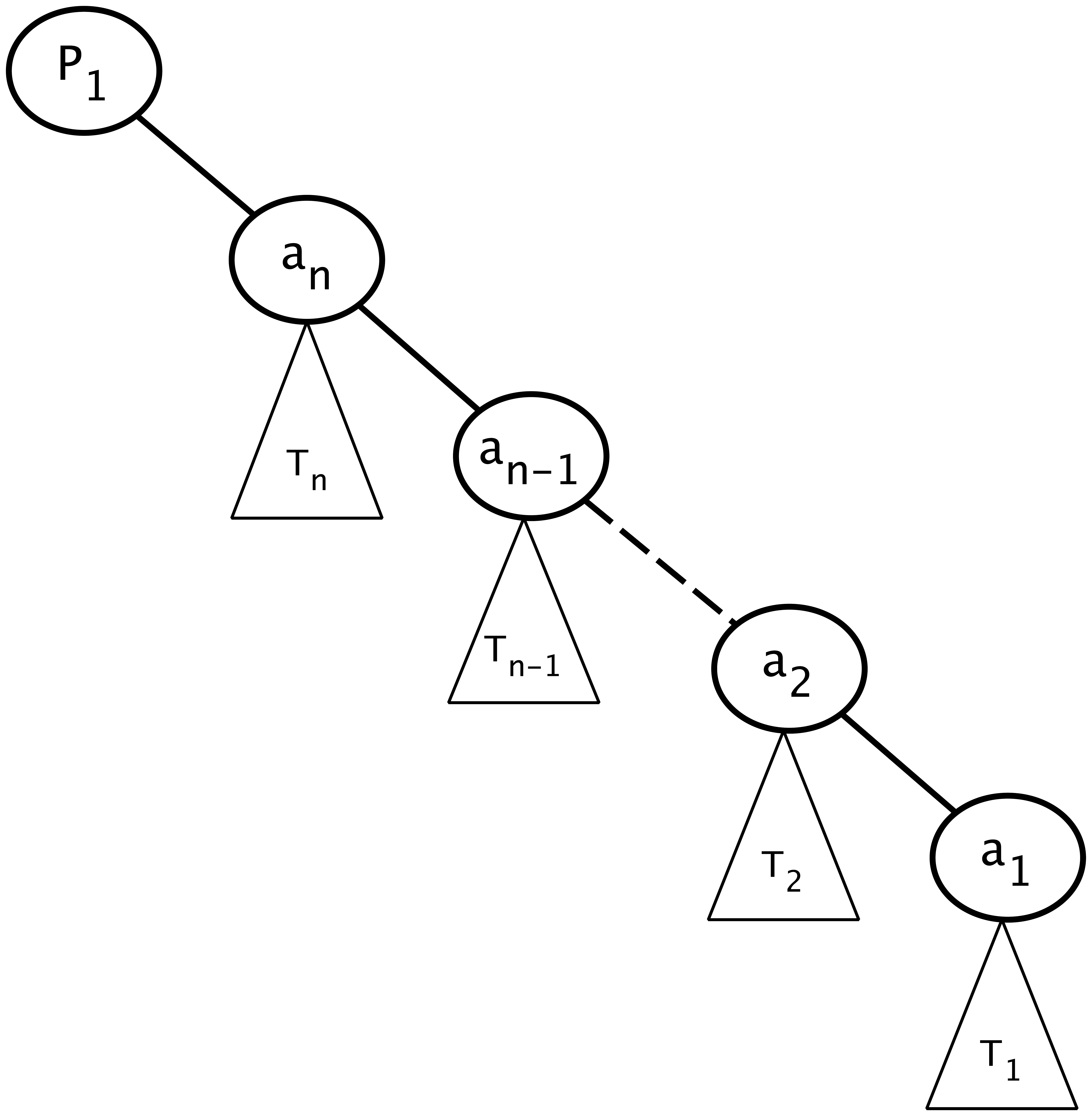}

\\
~~~~~~~~~~~~(a) & ~~~~~~~~~~~~~~~~~~~~~~~~~~ (b) \\
\end{tabular}\end{center}
\caption{(a) is the parse tree we  start with and (b) is the transformed parse tree. I.e., we delete $(P_1,a_1)$ and we add $(P_1,a_n)$.\label{fig:diagram:wps}} 
\end{figure}

\begin{proof}
We delete $(P_1,a_1)$ and we add $(P_1,a_n)$. See Figure  \ref{fig:diagram:wps} for illustration. 
By deleting $(P_1,a_1)$, the attributes in $P_1\cap a_1$ are the only ones affected and candidates for not satisfying the condition of Definition \ref{dfn-acyclicc}. However, since $P_1\cap a_1=P_1\cap a_2 = \cdots = P_1\cap a_n$, these attributes all appear in the new parse tree in a connected part of it because of edge $(P_1,a_n)$.
\end{proof}

We call the transformation of the parse tree implied by this lemma {\em the reverse path transformation of path  $p$}. 

%

\subsection{Maximal subtrees of a subjoin and other definitions }
\label{sec-23}

We define a {\em partial subtee} of a tree $T$ to be a subtree $T_i$ of $T$  where the leaves of $T_i$ are not necessarily leaves of $T$.

The main theorem  in this paper is the following:

\begin{theorem}
\label{thm-main}
Let $J$ be an acyclic join.
A subjoin of $J$ is safe iff there is a parse tree $T$ of the join $J$ such that the relations in the subjoin form a single partial subtree of $T$.
\end{theorem}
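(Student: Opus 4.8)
The plan is to prove the two directions separately; throughout I will call a parse tree in which the subjoin relations form a single partial subtree a \emph{witness tree}.

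\textbf{Easy direction (a witness tree implies safety).} Suppose $T$ is a witness tree, so the relations of $J_S$ induce a connected partial subtree $T_S$ of $T$. First I would observe that $T_S$ is itself a parse tree of $J_S$ and that the complement relations induce a forest in $T\setminus T_S$, so that both $J_S$ and $J^c_S$ are acyclic. Because $D$ is consistent for $J$ it is in particular pairwise consistent, and pairwise consistency of a family of relations is inherited by every subfamily; combined with acyclicity (pairwise consistency together with acyclicity is classically known to yield global consistency), the restriction of $D$ to the relations of $J_S$ is globally consistent for $J_S$, and likewise for $J^c_S$. The crux is then the \emph{extension claim}: every tuple $t_S\in J_S(D)$ is the restriction to $A_S$ of some tuple of $J(D)$. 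To see this I would use the structure of $T$: each connected component $C$ of $T\setminus T_S$ attaches to $T_S$ through a single tree edge $(b,c)$ with $b\in T_S$, and by the running-intersection property of Definition~\ref{dfn-acyclicc} every attribute shared between $C$ and $A_S$ lies in the separator $b\cap c$. Since $t_S[b]$ is a genuine tuple of the relation $b$ and $D$ is globally consistent, $t_S[b]$ participates in a full tuple of $J(D)$, which supplies an extension $e_C$ of $t_S$ into $C$ agreeing with $t_S$ exactly on $b\cap c$. Distinct components share no attribute outside $A_S$ (any such attribute would have to travel through $T_S$), so the extensions over all components can be glued onto $t_S$ into a single tuple $t\in J(D)$. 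Restricting $t$ to $A^c_S$ yields a tuple of $J^c_S(D)$ matching $t_S$ on the boundary attributes, so $t_S$ is not dangling and $J_S$ is safe.

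\textbf{Hard direction (safety implies a witness tree), by contraposition.} Assume that in \emph{every} parse tree of $J$ the relations of $J_S$ fail to form a single partial subtree; I must produce a consistent database on which $J_S$ yields a dangling tuple. I would fix an arbitrary parse tree $T$, in which the subjoin relations decompose into two or more maximal subtrees (the disconnected parts of Subsection~\ref{sec-23}), and split on the \textbf{Property} stated above. If the Property holds, there is a complement relation $r$ whose attributes occurring in the subjoin are not all contained in $r\cap r'$ for any single subjoin relation $r'$; here I would encode the join as tuple-generating dependencies and chase a seed instance so that this missing coverage forces the subjoin to manufacture a combination of boundary values the complement cannot match, exhibiting a dangling tuple (Case~1, Section~\ref{sec-5}). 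If the Property fails, every relation is well attached, and I would invoke the reverse path transformation of Lemma~\ref{stem-lemma}: along a path joining two maximal subtrees whose intersection with the parent is constant (the shared-attributes condition), reversing the path merges the two parts while keeping $T$ a parse tree of $J$, strictly decreasing the number of disconnected parts of $J_S$. Iterating this either drives the count to one, contradicting the standing assumption, or stalls at a configuration to which Case~1 applies, again giving a counterexample (Case~2, Sections~\ref{sec-counter-shared-sec} and~\ref{sec-warmup}).

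\textbf{Main obstacle.} The forward direction is essentially bookkeeping once the separator-and-gluing picture is fixed. The real work is the hard direction, and within it the counterexample construction of Case~1: one must design a single instance, uniformly over the structure of $J$, that is fully reduced (so that every tuple of every relation survives the reducer) yet makes the disconnected subjoin create a spurious boundary combination. Simultaneously guaranteeing consistency of the chased instance and forcing a dangling subjoin tuple is the delicate step, and it is also precisely what determines when the reverse path transformation can, or cannot, be used to repair the tree.
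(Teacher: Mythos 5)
Your forward direction (witness tree implies safety) is correct, and the separator-and-gluing extension argument is in fact more explicit than the paper's own proof of Theorem~\ref{thm-sh-attr-cedb1}, which argues directly from pairwise consistency and the fact that the boundary attributes span all attributes common to $J_S$ and $J^c_S$. No complaint there.

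The genuine gap is in your Case~2. You claim that iterating the reverse-path merge either reaches a single maximal subtree or ``stalls at a configuration to which Case~1 applies.'' It cannot: the \textbf{Property} of Subsection~\ref{subsec-components} (existence of an external relation with no associated subjoin relation) is a property of $J$ and $J_S$ alone --- it mentions only relations and their attributes, not any parse tree --- so if it fails at the start of Case~2 it fails forever, no matter how you transform the tree. Moreover, the chase construction of Section~\ref{sec-5} genuinely needs that hypothesis: the dangling tuple is exhibited precisely because the unassociated external node can never acquire a tuple whose boundary attributes all carry seed values, so every complement tuple disagrees with the seed tuple somewhere. When every external node \emph{does} have an associated node, the chase produces a fully reduced instance on which the seed tuple need not be dangling (it cannot be a counterexample in general, since safe subjoins exist in Case~2). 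What the paper does at the stalled configuration is a second, entirely different construction: it proves (Theorems~\ref{thm-counter-good1} and \ref{thm-counter-good2}) that if no break exists then the lowest maximal subtree $T_1$ \emph{leads to an n-set} $B_S$ (Definition~\ref{dfn-leads-to}) --- a set of attributes whose generated partial join is connected while $T_1$ is disconnected from the rest of the partial subjoin --- and then (Theorem~\ref{thm-sh-attr-cedb}) builds a two-tuple database by projecting onto every relation the all-$0$ tuple and the tuple that is $1$ exactly on $B_S$. This instance is fully reduced, yet the subjoin output contains, via Cartesian product, a tuple mixing $0$s and $1$s on $B_S$, which is dangling because connectivity of $B_S$ in the complement forces every complement tuple to be constant on $B_S$. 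Without this second counterexample construction (or an equivalent one), your hard direction does not go through.
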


It is convenient to think of a subjoin as a collection of partial subtrees in a specific parse tree of the   join. Thus, we define below maximal subtrees:

\begin{definition}
(Maximal subtrees of a subjoin)

\def\subsetnoteq{\setnoteq{\subseteq}}
Given an acyclic join $J$, a parse tree $T$ of $J$ and a subjoin $J_S$
of $J$, consider the set $S$ of all the relations
participating in subjoin $J_S$; A subset $S_i$ of $S$
is called a maximal subtree of $S$ with respect
to parse tree $T$, if there is a partial subtree of $T$
composed solely of relations in $S_i$, and, there is
no partial subtree of $T$ composed solely of relations in $S'$$ \subseteq S$
where $S_i \subsetneq$$ S'$.
\end{definition}

Given a parse tree $T$ of the join,
we think of the subjoin $Js$ and of its set of relations $S$ as the union of all its maximal
subtrees in $T$, let them be $S_1, S_2, \ldots$ and their roots $R_1, R_2,\ldots$ respectively.

{\sl Observation:} A root of a maximal subtree is neither equal to, nor a child of a node of another maximal subtree. 

This observation is true, because, otherwise, the subtrees are not maximal, since two of them can be viewed as one maximal subtree because they are connected in the parse tree.

%
%

We call {\em subjoin attributes} the attributes that appear in the relations in the subjoin.
We call {\em shared attribute} an attribute that is shared by at least two maximal subtrees in the subjoin.

A relation that belongs to the subjoin is called a {\em subjoin relation}, otherwise it is called an {\em external relation}.
A node of a parse tree whose relation belongs to the subjoin is called a {\em subjoin node}.
Any other node of a parse tree is called an {\em external node.} 
We say that a subjoin node (relation, respectively) $u$ is an {\em associated subjoin node} ({\em associated subjoin relation}, respectively)  of an external node (relation, respectively) $v$ if $u$ contains all subjoin attributes that are contained in $v$. We often say simply  {\em associated  node} or {\em associated  relation}.

Now that we have introduced our terminology we can explain in a technical level the structure of the rest of the paper.
The proof  of Theorem \ref{thm-main} proceeds as follows: We have two  cases, a) when there is an external relation that has no associated relation\footnote{This is equivalent to the property stated in Subsection \ref{subsec-components}} (this is the case in Section \ref{sec-5} and we prove that the subjoin is non-safe in this case), and b) when all external relations have their associated nodes. 

In the second case, we apply repeatedly a procedure  (similar to the one presented in Section \ref{sec-warmup}) that reduces the number of  maximal subtrees. If this procedure fails then we prove
that the conditions of Theorem \ref{thm-sh-attr-cedb} (from Section \ref{sec-warmup}) are satisfied (this is done in Section \ref{sec-counter-shared-sec}), hence, we use Theorem \ref{thm-sh-attr-cedb} to prove that the subjoin is not safe. Section \ref{sec-warmup} considers the special case where we have only two maximal subtrees in the subjoin but it contains many of the complications of the second case which is treated fully in Section \ref{sec-counter-shared-sec}.

\section{Warmup Example: Two Maximal Subtrees}
\label{sec-warmup}

We consider the case where there is a parse tree $T$ of the join $J$ such that the subjoin $J_S$ consists of  two maximal subtrees, let them be $T_1$ and $T_2$. This is the main result of the section:

\begin{theorem}
If the subjoin has two maximal subtrees in a parse tree, $T$, then the following holds: The subjoin is safe if and only if there is a parse tree $T'$ where the the relations in the subjoin form a single  partial subtree of $T'$.
\end{theorem}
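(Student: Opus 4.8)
The plan is to prove the two directions separately, using the reverse path transformation of Subsection~\ref{subsec-reverse} (Lemma~\ref{stem-lemma}) as the engine for the forward construction and the non-safety result of Theorem~\ref{thm-sh-attr-cedb} for the converse.

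For the \emph{if} direction (a single partial subtree implies safety) I would argue directly from full reduction. Suppose $T'$ is a parse tree in which the subjoin relations form one partial subtree with root $R'$, and let $P'$ be its parent (an external node; if $R'$ is the root of $T'$ the argument below runs edge by edge over the subtrees hanging below subjoin nodes). By the running-intersection property of a parse tree (Definition~\ref{dfn-acyclicc}), every attribute shared between $J_S$ and its complement occurs in a connected set of nodes crossing the single edge $(P',R')$, hence lies in $R'\cap P'$; so the boundary attributes separating $J_S$ from $J_S^c$ are confined to this one separator. Since $D$ is consistent, every tuple $t_S\in J_S(D)$ projects on $R'\cap P'$ to a value consistent with the rest of $D$, so $t_S$ joins with a tuple of $J_S^c(D)$. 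Thus $J_S(D)$ has no dangling tuples with respect to $J_S^c(D)$, and $J_S$ is safe.

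For the \emph{only if} direction I would prove the contrapositive inside the two-subtree structure: if the two maximal subtrees cannot be merged into one partial subtree, then $J_S$ is not safe. First re-root $T$ at a node of $T_1$; since running intersection is a property of the unrooted tree, this is again a parse tree and the two maximal subtrees are unchanged, while $R_2$ now lies in a subtree below $T_1$. Let $w$ be the deepest node of $T_1$ that is an ancestor of $R_2$, and let $w,u_1,\ldots,u_{k-1},R_2$ be the path from $w$ to $R_2$; by maximality of $T_1$ and $T_2$ the intermediate nodes $u_1,\ldots,u_{k-1}$ are external. Consider the downward path $p=u_1,\ldots,u_{k-1},R_2$ whose parent is $P_1=w$. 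If $p$ satisfies the shared-attributes condition, i.e. $w\cap u_1=w\cap u_2=\cdots=w\cap R_2$, then the reverse path transformation of Lemma~\ref{stem-lemma} produces a parse tree $T'$ in which $R_2$ is a child of $w$; then $T_1\cup\{R_2\}\cup T_2$ is a connected partial subtree consisting solely of subjoin relations, so the subjoin forms a single partial subtree of $T'$ and (by the first direction) is safe.

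It remains to treat the case where the shared-attributes condition fails, which is exactly where non-safety must be established. Using running intersection one checks that every attribute shared between $T_1$ and $T_2$ appears in $w$, in $R_2$, and in all of $u_1,\ldots,u_{k-1}$, so $w\cap R_2\subseteq w\cap u_i$ for each $i$; the condition fails precisely when some $u_j$ carries an attribute $A\in w\setminus R_2$, that is, an attribute shared between $T_1$ and the external chain that does not reach $T_2$. I would show that such an $A$ places the configuration into the hypotheses of Theorem~\ref{thm-sh-attr-cedb}, whose conclusion is that $J_S$ is not safe; the contrapositive then gives that safety forces the shared-attributes condition, hence the merge, hence the existence of $T'$. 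The main obstacle I anticipate is this last step: verifying cleanly that a failure of the shared-attributes condition always instantiates the hypotheses of Theorem~\ref{thm-sh-attr-cedb} (choosing the offending attribute $A$ and external mediator $u_j$, and ruling out the symmetric possibility of instead merging $T_1$ into the side of $T_2$), together with the bookkeeping of projected and boundary attributes needed to guarantee that the counterexample database built for non-safety is indeed consistent.
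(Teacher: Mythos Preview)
Your \emph{if} direction follows the same idea as the paper's Theorem~\ref{thm-sh-attr-cedb1}; the paper argues directly from pairwise consistency rather than through a single separator edge, and indeed your claim that all boundary attributes lie in $R'\cap P'$ is not literally correct (the partial subtree may touch several external neighbours), but this is minor and easily repaired.

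The \emph{only if} direction has a real gap. Your dichotomy is: either the whole descending path $u_1,\ldots,u_{k-1},R_2$ satisfies the shared-attributes condition with respect to $w$, and you merge via Lemma~\ref{stem-lemma}; or it fails, and you invoke Theorem~\ref{thm-sh-attr-cedb}. But by running intersection the chain $w\cap u_1\supseteq w\cap u_2\supseteq\cdots\supseteq w\cap R_2$ is already nested, so your condition collapses to the single equality $w\cap u_1=w\cap R_2=S$, i.e.\ a break at the \emph{first} edge $(w,u_1)$. The paper's dichotomy is different and strictly more permissive on the ``merge'' side: delete $S=R_1\cap R_2$ from every node on the path and ask whether the resulting partial path stays connected (Definition~\ref{dfn-breakk}). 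If \emph{any} consecutive pair $(u,u')$ has $u\cap u'=S$, one reverses only the sub-path from that break point to one root and reattaches it under the other root (Proposition~\ref{pro-break-reverse}); only when there is no break anywhere do the surviving path attributes form an n-set so that Theorem~\ref{thm-sh-attr-cedb} applies.

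A break in the interior defeats your test from both ends simultaneously. Take $w=R_1=\{A,B\}$, $u_1=\{A,B,C\}$, $u_2=\{A,C\}$, $u_3=\{A,D\}$, $u_4=\{A,D,E\}$, $R_2=\{A,E\}$, so $S=\{A\}$. Then $w\cap u_1=\{A,B\}\neq S$ and (symmetrically) $R_2\cap u_4=\{A,E\}\neq S$, so your condition fails from either side; yet $u_2\cap u_3=\{A\}=S$ is a break and the subjoin \emph{is} safe (the chain $AC$, $ABC$, $AB$, $AE$, $ADE$, $AD$ is a valid parse tree with $AB$ and $AE$ adjacent). Thus the step you flagged as the ``main obstacle'' is not merely hard to verify: it is false as stated. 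The fix is exactly what the paper does---look for a break anywhere on the external path, apply Lemma~\ref{stem-lemma} only to the sub-path between the break point and one root, and reserve Theorem~\ref{thm-sh-attr-cedb} for the genuinely break-free case.
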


The one direction of the above theorem is easy and is presented in the theorem below. The rest of this section proves the other direction.

\begin{theorem}
\label{thm-sh-attr-cedb1}
If there is a parse tree $T$ of the join $J$ such that the relations in the subjoin $J_S$ form a single partial subtree (call it $T_1$) of $T$, then the subjoin $J_S$ is safe. 
\end{theorem}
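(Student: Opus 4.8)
The plan is to show that every tuple of $J_S(D)$ extends to a full tuple of the join $J$; such an extension immediately supplies a partner tuple in $J^c_S(D)$ and hence rules out dangling tuples. Fix a consistent database $D$ and a tuple $t_S\in J_S(D)$. Since the relations of $J_S$ form the connected partial subtree $T_1$, the join $J_S$ is itself acyclic with parse tree $T_1$, so $t_S$ is the projection, onto the output and boundary attributes of $T_1$, of some assignment $\sigma$ to all attributes appearing in $T_1$ that satisfies every subjoin relation. The goal is to extend $\sigma$ to an assignment on all attributes of $J$ that satisfies every relation; restricting such an extension to the complement relations then yields a tuple of $J^c_S(D)$ that agrees with $t_S$ on the boundary attributes.

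The structural input is the running intersection property of the parse tree $T$. The external relations form a forest of connected components, each attached to $T_1$ through exactly one edge of $T$: either the edge from the root of $T_1$ to its parent, or an edge from a node $u\in T_1$ to an external child $w\notin T_1$. For each such cut edge the intersection $S$ of its two endpoints is a separator: by acyclicity (Definition \ref{dfn-acyclicc}), every attribute shared between the attached external component and the rest of the tree lies in $S$. Consequently distinct external components share no attributes with one another outside $T_1$, so they may be extended independently, and each one need only be made to agree with $\sigma$ on its separator $S$.

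The main step is to carry out each such extension, and here I would use that $D$ is consistent, which in particular makes it pairwise consistent along every edge of $T$. I extend $\sigma$ across one cut edge at a time, proceeding away from $T_1$ into each external component along the tree. At the first external node $w$ attached to $u\in T_1$, the assignment $\sigma$ satisfies $r_u$, so $\sigma[S]\in\pi_S(r_u)$; pairwise consistency of $r_u$ and $r_w$ gives $\pi_S(r_u)=\pi_S(r_w)$, so $\sigma$ can be extended to the attributes of $w$. The running intersection property ensures the newly assigned attributes are confined to $w$'s separator towards the rest of the tree, so no earlier choice is disturbed, and iterating down each external subtree produces a complete assignment $\sigma^{*}$ on all attributes of $J$ that satisfies every relation.

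Finally, $\sigma^{*}$ defines a tuple $t\in J(D)$; its restriction to the complement relations is a tuple $t_c\in J^c_S(D)$ that agrees with $t_S$ on the boundary attributes, so $t_S$ joins with $t_c$. As $t_S$ was arbitrary, $J_S(D)$ has no dangling tuples with respect to $J^c_S(D)$, and since $D$ was an arbitrary consistent database, $J_S$ is safe. I expect the main obstacle to be the extension step: making precise that the independently extended external components are mutually compatible and that the process never gets stuck. This is exactly where the running intersection property of the parse tree, combined with the given consistency of $D$ applied edge by edge, is indispensable.
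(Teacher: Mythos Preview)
Your proposal is correct. It differs from the paper's proof mainly in style: the paper argues briefly by contradiction (if $t_S$ were dangling, one infers that some pair of base tuples $t_i,t'_j$ from $D$ fail to join, contradicting pairwise consistency of the fully reduced $D$), whereas you construct the witness directly, walking outward from $T_1$ along the parse tree and invoking pairwise consistency at each edge together with the running intersection property to keep the partial assignments compatible. Both arguments rest on the same structural fact---that when $T_1$ is a connected subtree every external component meets $T_1$ through a single separator---so they are close in spirit; your version makes explicit why distinct external components can be extended independently and why the edge-by-edge extension never gets stuck, points the paper's short proof leaves implicit. The trade-off is that the paper's argument is two lines once one accepts that the boundary attributes span all common attributes, while yours is self-contained and would go through verbatim under the weaker hypothesis of pairwise consistency along $T$.
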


\begin{proof}
Let $D$ be a fully reduced database. We compute $J_S(D)$ and $J^c_S(D)$. 

If $t\in J_S(D)$ is not a dangling tuple, then there is a tuple $t'\in J^c_S(D)$ that joins with $t$.
Now $t$ is computed from tuples $t_1,\ldots t_m$ of $D$ and $t'$ is computed from tuples $t'_1,\ldots t'_l$ of $D$. Suppose there are two tuples, one from each, i.e., say tuple $t_i$ and tuple $t'_j$ that do not join. Then $t$ and $t'$ do not join either because the projected boundary attributes in $J_S$ and $J^c_S$ span all common attributes in $J_S$ and $J^c_S$.
Hence all pairs of such tuples join.

Suppose  $t\in J_S(D)$  is a dangling tuple. Then, according to the above, there are two tuples of $D$
that do not join. This is contradiction because $D$ is pairwise consistent. 
\end{proof}

\subsection{Structure  of the rest of the section}
\label{subsec-prelimbreak}
To proceed with the proof, we focus on a particular path $p$.

We consider the path, $p$, joining the two roots $R_1$ and $R_2$ of $T_1$  and $T_2$ respectively in the tree $T$; $p$ includes the two roots too.  For the case treated in this section, we assume wlog that the lowest common ancestor of $R_1$ and $R_2$ is neither $R_1$ nor $R_2$ (assuming that a node is also an ancestor of itself)\footnote{otherwise,  we change the root of $T$ to any node of $p$ not in the subjoin}\footnote{For the general case however dealt in Section \ref{sec-counter-shared-sec}, we will have to consider the other case too in order to be technical, although only a simple modification  is needed.}.

We have two cases depending on a property of the path from one root to the other.  In particular, if we delete all shared attributes (between the two maximal subtrees) from this path, then either the path is broken (i.e., there are two consecutive nodes with no common attributes) or not. In the first case we prove that the subjoin is safe and, in the second case, we prove that the subjoin is not safe. We need some definitions first. 

%
%
%
%
%

Let $S$ be the maximal set of attributes that is shared by all nodes of $p$ ($S$ could be empty). 
Hence, $S$ is the set of exactly those attributes shared by both roots $R_1$ and $R_2$.
We consider the partial edges of the hypergraph of $J$  that are generated by $ALL-S$ (where $ALL$ is the set of all attributes in the  join $J$) and refer to the hypergraph thus constructed by $J_p$. We  argue on $J_p$. We refer to the path $p$ after deleting from its nodes the attributes in $S$ (i.e., as it is viewed in $J_p$) as the {\em partial path } $p$.  

\begin{definition}
 \label{dfn-breakk}
We consider the partial path $p$.
We  have two cases : the partial path $p$ is connected or it is disconnected. In the second case we say that there is a {\em break}. We choose two nodes $u$ and $u'$ to define a  {\em break point} as follows: These are nodes $u$ and $u'$ on $p$ that have a child-parent relationship on $T$, such that $u$ and $u'$ do not share any attributes in in patial path $p$.  Wlog, suppose $u$ and $u'$  appears along the path from root $R_1$ to  the least common ancestor (LCA) of the two roots $R_1$ and $R_2$ and  $u'$ is closer to $R_1$ than $u$. 
We say that the pair $(u,u')$ is a {\em break point}  with respect to $T_1$.  
\end{definition}

For an example of a break, see Subsection \ref{subsec-stem-break-examples}.
When there is a break, we use
Proposition \ref{pro-break-reverse}, otherwise we argue as in Subsection \ref{subsec-nobreaktwo}.

%
%
%
%
%
%

\subsection{There is a break}


\begin{proposition}
\label{pro-break-reverse}
Suppose for the acyclic join $J$ we have a parse tree $T$ where the subjoin $J_S$ has two maximal subtrees. Suppose there is a break. Then there is a parse tree where the subjoin $J_S$ has only one maximal subtree. 
\end{proposition}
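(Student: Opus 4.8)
The plan is to reach the desired parse tree by applying the reverse path transformation (Lemma~\ref{stem-lemma}) twice, using the fact that re-rooting a parse tree at an arbitrary node again yields a parse tree: the requirement of Definition~\ref{dfn-acyclicc} is a property of the underlying \emph{undirected} tree, so the choice of root is free. Write the break edge as $\{u,u'\}$, with $u'$ the endpoint closer to $R_1$ and $u$ the endpoint closer to the LCA; by the definition of a break together with $S\subseteq r$ for every node $r$ of $p$, we get $u\cap u'=S$. The first thing I would record is the \emph{separator property}: for any node $a$ of $p$ lying on the $R_2$-side of the break, the tree path from $u'$ to $a$ passes through $u$, so by the connectedness condition of Definition~\ref{dfn-acyclicc} every attribute of $u'\cap a$ lies in $u$ as well, whence $u'\cap a\subseteq u'\cap u=S$; since $a$ is on $p$ we also have $S\subseteq u'\cap a$, so $u'\cap a=S$.

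Next I would carry out the two transformations. Step~1: re-root $T$ at $R_1$, so that $u'$ becomes an ancestor of $u$ and the path from $u$ down to $R_2$ is a descending chain whose parent pivot is $P_1=u'$. By the computation above the shared-attributes condition holds along this chain, so Lemma~\ref{stem-lemma} produces a parse tree in which $R_2$ is a child of $u'$; the maximal subtree $T_2$ travels with $R_2$, and the relocated intermediate nodes stay external. Step~2: re-root this tree at $R_2$, so that $u'$ is now a child of $R_2$ and the path from $u'$ down to $R_1$ is a descending chain with pivot $P_1=R_2$. Because the current break edge is $\{R_2,u'\}$ with $R_2\cap u'=S$ (this is exactly the $a=R_2$ case of Step~1), the same routing argument gives $R_2\cap b=S$ for every node $b$ on this chain, so Lemma~\ref{stem-lemma} applies once more and yields a parse tree in which $R_1$ is a child of $R_2$.

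Finally I would observe that in the resulting tree $R_1$ is a child of $R_2$, both are subjoin nodes, and the internal edges of $T_1$ and of $T_2$ are untouched, so $T_1\cup T_2$ is a connected set of subjoin relations. Since all nodes strictly between $R_1$ and $R_2$ along $p$ are external, and the side-subtrees carried along by the two reverse path transformations are themselves external, the subjoin relations now form a single partial subtree, i.e.\ a single maximal subtree, which is what the statement asserts.

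The hard part will be the bookkeeping of the shared-attributes condition across the two re-rootings: one must verify that the pivot's intersection with the \emph{entire} chain collapses to exactly $S$ at each step, which is precisely the separator property forced by acyclicity, and one must check that no relocated side-subtree secretly contains subjoin relations. This last point is where I use that $J_S$ has exactly two maximal subtrees in $T$, so there is no third cluster of subjoin relations hiding along $p$ or off it. A minor degenerate case to dispatch is $u'=R_1$ (the break sits immediately above $R_1$): then Step~1 already makes $R_2$ a child of $R_1$ and Step~2 is vacuous; the cases where the LCA equals $R_1$ or $R_2$ are handled by the re-rooting convention already adopted before the proposition.
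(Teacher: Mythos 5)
Your proposal is correct and follows essentially the same route as the paper: both derive the separator property $u\cap u'=S$ from the break and then use the reverse path transformation (Lemma~\ref{stem-lemma}) to end with the root of one maximal subtree as a child of the other, merging $T_1$ and $T_2$ into a single partial subtree. The only difference is mechanical: the paper applies the lemma once (to the segment from $u'$ to $R_1$ with pivot $u$) and then directly re-parents $R_1$ under $R_2$, justified by $R_1\cap u=S\subseteq R_2$, whereas you re-root twice and substitute a second application of the lemma for that final re-parenting move.
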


\begin{proof} 
We consider the break point $(u,u')$. Observe that the path from $u'$ to the root of the  maximal subtree $T_1$ satisfies the shared-attributes condition which is necessary for the reverse path transformation of Subsection \ref{subsec-reverse}.
We apply the reverse path transformation.  Hence, we can obtain a parse tree where the root of $T_1$  is a child of the upper node of the break, $u$. After that, we transform further the parse tree  by transferring the subtree rooted in the root of $T_1$  to be  a child of the root of $T_2$, i.e., technically we only change the parent of the root of $T_1$ to be the root of $T_2$. This last transformation is feasible because only  attributes  in the set $S$ are common between the root of $T_1$  and the upper node $u$ of the break
 and $S$ appears in the root of $T_2$.
\end{proof}

For an example, in Figure \ref{fig:diagram:wps5} in Appendix \ref{app-ex}, the subjoin $AE\bowtie ADE$ is safe because there is another parse tree where 
$AE$ is a leaf again but with parent the node $ADE$.

\subsection{No break. Counterexample database by shared attributes}
\label{subsec-nobreaktwo}

In this subsection, we will prove a more general result than the one needed in the case of two maximal subtrees. We do that because the special case here is not less complicated than the general case treated in Section \ref{sec-counter-shared-sec}.

Considering the  join hypergraph and a set of partial edges generated by a certain set  of attributes, we refer to the join that results from these partial edges (i.e., same schema as these edges) as  {\em partial join}. In 
the same sense we refer to the {\em partial subjoin} of a subjoin. 

The following defines a set of attributes with certain useful properties; we show that such a set exists in the case there is no break.

\begin{definition}
\label{dfn-leads-to}
Let $T$ be a parse tree of join $J$.
Let $T_1$ be a maximal subtree of $J$ in $T$ and
 $B_S$ be a  nonempty set of attributes  with the following property:  Consider the partial edges generated by $B_S$. Then
a) the partial join is connected (as a hypergraph) and 
 b) the partial subjoin is disconnected in the following particular fashion: $T_1$ is disconnected from $\cup_{i\neq 1}T_i$ (i.e., from the rest of the subjoin).

%

Then, we call the set $B_S$ an {\em n-set} with respect to maximal subtree $T_1$  and we say that maximal subtree $T_1$ {\em leads to the n-set $B_S$}. 
\end{definition}

In the case we have two  maximal subtrees,  if there is no break then   it is easy to find an n-set with respect to one of the maximal subtrees, say wlog, wrto $T_1$. We first, delete the maximal set of shared attributes between the two maximal subtrees. Then, we consider $B_S$ to be the set of all attributes that appear on the nodes  of the path (in $T$) from one root to the other root in the parse tree (after the deletion of the shared attributes).  By definition and assumptions made (i.e., there is no break), $B_S$ has the properties as in the definition above. Thus we have found $B_S$ which is a n-set and maximal subtree $T_1$ leads to $B_S$.

We prove the following theorem :  

\begin{theorem}
\label {thm-sh-attr-cedb}
Suppose there is a maximal  subtree $T_1$ and a set of attributes $B_S$ such that the tree
$T_1$  leads to the n-set $B_S$.
Then the subjoin is non safe.
\end{theorem}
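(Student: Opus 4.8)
The goal is to exhibit, for an arbitrary parse tree $T$ in which some maximal subtree $T_1$ leads to an n-set $B_S$, a fully reduced (consistent) database $D$ on which $J_S(D)$ has a dangling tuple with respect to $J^c_S(D)$. The natural strategy mirrors the one announced for Case~1 in Section~\ref{sec-5}: build a small "canonical" instance witnessing non-safety and then repair it into a genuinely consistent instance by chasing with the tuple-generating dependencies that the acyclic structure of $J$ imposes. Since $B_S$ is an n-set, the partial join restricted to $B_S$ is connected while the partial \emph{subjoin} splits into $T_1$ on one side and $\bigcup_{i\neq 1}T_i$ on the other; this asymmetry is exactly what lets a tuple of $J_S(D)$ fail to extend to $J(D)$, because the value of the $B_S$-attributes can be "glued" consistently across the subjoin (through the connected partial join, which runs through external relations) in a way that does not correspond to any single tuple of the full join.

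\textbf{Construction of the witness.} First I would isolate the $B_S$-attributes and use the connectivity of the partial join on $B_S$ to produce two distinct "gluing values," say one set of values $\alpha$ that is forced on the $T_1$-side and a different set $\beta$ forced on the $\bigcup_{i\neq 1}T_i$-side, together with a path of external relations along which $\alpha$ and $\beta$ each propagate consistently. Concretely, I would seed each relation of $J$ with two tuples (a $\alpha$-flavored one and a $\beta$-flavored one) agreeing with neighbors wherever the parse tree forces agreement, so that locally every pair of adjacent relations is consistent. In the subjoin the disconnection of $T_1$ from the rest means that the $T_1$-block may take its $B_S$-attributes from the $\alpha$ world while the remaining blocks take theirs from the $\beta$ world, producing a tuple $t_S\in J_S(D)$ that mixes $\alpha$ on $T_1$ with $\beta$ elsewhere. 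This mixed tuple has no counterpart in $J(D)$ precisely because the connectedness of the partial join on $B_S$ forces the full join to commit to a single world.

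\textbf{Enforcing consistency.} The delicate point is that exhibiting a dangling tuple is not enough: $D$ must be \emph{fully reduced}, i.e.\ consistent for $J$, so that we are inside the hypothesis of the definition of safe. For acyclic joins, pairwise consistency implies global consistency, so it suffices to arrange that every pair of relations sharing an attribute is pairwise consistent. I would therefore take the local seed described above and close it under the chase with the join dependency / semijoin reductions associated with $T$: running semijoins to a fixpoint removes exactly the dangling tuples of $D$ itself while, by the monotonicity of the acyclic reduction, preserving the witness tuple $t_S$ in $J_S(D)$ and its non-extendability. The argument that $t_S$ survives the reduction is where care is needed: I must check that the chase does not accidentally introduce a tuple realizing the mixed $\alpha/\beta$ combination in the full join, and this is guaranteed by keeping the $\alpha$ and $\beta$ value-worlds on the $B_S$-attributes disjoint along the connecting path.

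\textbf{Main obstacle.} The hard part is the bookkeeping that simultaneously (i) keeps $D$ pairwise (hence globally) consistent after the chase and (ii) certifies that the mixed tuple $t_S$ is dangling in $J_S$, i.e.\ fails to join with any tuple of $J^c_S(D)$. These two requirements pull in opposite directions, since making $D$ consistent tends to add tuples that could rescue $t_S$. The resolution hinges on the precise n-set property: connectedness of the partial join on $B_S$ is what blocks the full join from mixing worlds, while the disconnection of $T_1$ within the partial subjoin is what permits $J_S$ to mix them. I expect the bulk of Section~\ref{sec-5}'s machinery (tuple-generating dependencies and chase) to be imported here verbatim, with the only new content being the choice of $B_S$-worlds and the verification that the break in the partial subjoin translates into a dangling tuple.
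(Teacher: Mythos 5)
Your core idea---populate every relation with two ``worlds'' of values, observe that the disconnection of $T_1$ in the partial subjoin lets $J_S$ form a mixed tuple, and that connectivity on $B_S$ prevents this mixture from being matched---is exactly the idea behind the paper's proof. But the two steps you leave open are precisely where the content lies, and as written your construction does not go through. First, you never specify how the worlds $\alpha$ and $\beta$ relate \emph{outside} $B_S$. By Definition~\ref{dfn-leads-to}, $T_1$ shares no $B_S$-attributes with the rest of the partial subjoin, but the original relations of $T_1$ do in general share non-$B_S$ attributes with the rest of the subjoin (e.g.\ the set $S$ of attributes common to the maximal subtrees, which is removed before $B_S$ is formed). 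If $\alpha$ and $\beta$ differ on those shared attributes, then the ``mixed'' tuple $t_S$ is simply not produced by $J_S(D)$: the $\alpha$-tuples of the $T_1$-relations do not join with the $\beta$-tuples of the other blocks, and the whole argument collapses. The paper's construction works because the two universal tuples are chosen identically $0$ on all of $ALL-B_S$ and differ ($1$ versus $0$) exactly on $B_S$, so the mixing takes place only on attributes where the two sides of the subjoin are disjoint. You gesture at this (``take its $B_S$-attributes from the $\alpha$ world'') but the off-$B_S$ agreement, which is the one property making the Cartesian-product step legitimate, is never stated.

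Second, your ``enforcing consistency'' stage---chase with tgds and semijoin reduction, which you yourself flag as the main unresolved obstacle and propose to import from Section~\ref{sec-5} ``verbatim''---is not needed at all, and deferring to it leaves the proof incomplete. Once each relation is populated with the projections of the two universal tuples (all-$0$, and $1$ exactly on $B_S$), both universal tuples are themselves outputs of $J$ on $D$; hence every base tuple extends to an output tuple and $D$ is fully reduced \emph{by construction}, with nothing to repair. (Note also that you conflate two different operations: semijoin reduction only deletes tuples, while the chase of Section~\ref{sec-5} only adds them; that chase machinery belongs to the other case of the main theorem, where some external node has no associated subjoin node, and is not used here.) Your claim that the witness ``survives the reduction'' is asserted rather than proved---if any base tuple composing $t_S$ were dangling in your under-specified seed, $t_S$ would vanish. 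So the proposal is a correct plan whose two critical verifications are missing; fixing both amounts to adopting the paper's two-tuple construction, under which the dangling-tuple argument (all tuples of the partial complement are all-$0$ or all-$1$ on $B_S$ by connectivity, while $t_S$ is neither) closes the proof.
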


\begin{proof}

%
%
We form an imaginary relation with all attributes in the join. We populate it with two tuples.
One tuple has 0 in all attributes. The other tuple has 1 in all attributes in $B_S$, and it has 0  in all other attributes. (Figure \ref{fig:imaginary}).
Now we populate the relations in the  join by the projections of these two tuples. Thus, we build database $D$.


The database $D$ we constructed is fully reduced. This is straightforward by construction.


We consider the partial edges of the hypergraph of $J$  that are generated by $B_S$  and refer to the hypergraph thus constructed by $E$. 
Let database $D_E$  be a database on the schema of $E$  which results from database $D$ after dropping the  attributes (its values actually) that do not appear in $E$.
We use the notation $0|t$ to define a tuple created from tuple $t$ by appending  some 0's.
To continue with the proof of the theorem, we need the two lemmas below, which argue on $E$ and $D_E$.

%
%

\begin{figure}

\begin{center}

\begin{tabular}{c|c|c|c|}
& $ALL-B_S$ & $B_S$\\
\hline
tuple $t_1:$~~~& 000 $\cdots$000  &111 $\cdots$111  \\
\hline
tuple $t_2:$~~~& 000 $\cdots$000  &000 $\cdots$000  \\
\hline
\end{tabular}

\end{center}
\caption{The imaginary relation}
\label{fig:imaginary}
\end{figure}

\begin{figure}

\begin{center}
\begin{tabular}{c|c|c|c|}
& $ALL-B_S$ & $B_S$\\
\hline
relations of type 1, ~~~~single tuple~~~& 000 $\cdots$000  & no attributes  \\
\hline
relations of type 2, tuple of kind 1~~~& 000 $\cdots$000  &111 $\cdots$111  \\
relations of type 2, tuple of kind 2~~~& 000 $\cdots$000  &000 $\cdots$000 \\
\hline
relations of type 3, tuple of kind 1~~~& no attributes  &111 $\cdots$111  \\
relations of type 3, tuple of kind 2~~~& no attributes  &000 $\cdots$000  \\
\hline
\end{tabular}

\end{center}
\caption{The structure of the counterexample database}
\label{fig:structure}
\end{figure}
%
%
%
%
%
%
%
%
%
%
%
%
%
%

\begin{lemma}
Let $J'$  be any subjoin of $J$ and $J'_p$ the partial subjoin of $J'$ with respect to $B_S$.
Consider the constructed database $D$, the partial edges $E$  and the database $D_E$. 

Then the following is true:
A tuple $t$ is in $J'_p(D_E)$ iff the tuple $0|t$  is in $J'(D)$.
\end{lemma}

\begin{proof}
Consider the two disjoint sets of attributes $B_S$ and $ALL-B_S$, where $ALL$ is the set of all attributes in the join. Consider tuples $t_1$ and $t_2$ of the imaginary relation with all attributes (Figure \ref{fig:imaginary}).
For each relation $r_i$ in $J$, there is  a tuple $t_{i1}$   in $r_i$ which is the projection of $t_1$ on the
attributes of $r_i$ and another tuple $t_{i2}$   in $r_i$ which is the projection of $t_2$ on the
attributes of $r_i$.
Hence, each relation $r_i$ of the constructed database $D$ contains either one tuple or two tuples. We have three types of relations illustrated also in Figure  \ref{fig:structure}. More specifically,  each relation of type 1 has only one tuple and 
the value of all attributes in this tuple is 0. Each relation of type 2 has   one tuple with 0's in the
$ALL-B_S$ attributes and 1 in the $B_S$ attributes and another tuple with 0's in the $ALL-B_S$ attributes and 0 in the
$B_S$ attributes (in total two tuples). Each relation of type 3 has also two tuples but its schema consists only of attributes in  $B_S$, one tuple  has all 0's and the other tuple has all 1's tuple.

Thus, in all cases, all the tuples in the relations have  the value 0 in the attributes in $ALL-B_S$  (if $B_S$ is in the schema). Hence, an assignment of values to the attributes for 
computing a tuple in $J'(D)$ corresponds to an assignment of values to the attributes for computing a tuple in $J'_p(D_E)$.
\end{proof}




\begin{lemma}
Consider the partial edges  in $E$ and the database $D_E$. The partial subjoin $J'_p$ (as defined above, with respect to $B_S$) computes a dangling tuple applied on $D_E$ with respect to the partial complement subjoin $J'_{pc}$.\footnote{In more detail, $J'_{pc}$ is the partial subjoin with respect to $B_S$ of the complement subjoin $J'_c$ of $J'$.}
\end{lemma}

\begin{proof}

The proof is based on the following remarks:  In database $D_E$, some relations have one tuple and some relations have two tuples (see  Figures \ref{fig:imaginary} and \ref{fig:structure} for an illustration too). The relations with two tuples are the ones that have at least one attribute from $B_S$. These relations  form a set which has two disjoint subsets, one subset being part of $T_1$ and the other subset being part of of the rest of the subjoin. We have already pointed out that $B_S$ is the set of attributes on connected relations in $J$, hence the join  $J'_{pc}(D_E)$ contains only two tuples, in particular the ones that have either all 0s or all 1s in attributes in $B_S$. However, the set of attributes from $B_S$ that appear in $T_1$ is disjoint from the set of attributes from $B_S$ that appear in the rest of the subjoin. Hence when we compute the subjoin $J'_p(D_E)$, we have a Cartesian product. This means that there is a tuple computed that have necessarily both 1s and 0s in attributes in $B_S$. This tuple is dangling because it cannot join with any tuple in $J'_{pc}(D_E)$. The reason is that any tuple in $J'_{pc}(D_E)$ has either all 0's or all 1's in the attributes of $B_S$ because all the attributes in $B_S$ are connected in the hypergraph of $J'_{pc}$.
\end{proof}\end{proof}

\section{There is an External Node that does not have an Associated Subjoin Node}
\label{sec-5}


The main theorem of this section is the following:
\begin{theorem}
\label{thm-chase}
Let $T$ be a parse tree of the join. 
If there is an external node in $T$ that does not have an associated node in the subjoin, then the subjoin is not safe.
\end{theorem}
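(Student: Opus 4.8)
Fix the parse tree $T$ and an external node $v$ with no associated subjoin node, and write $X := v \cap A_S$ for the subjoin attributes occurring in $v$. The hypothesis says precisely that $X \not\subseteq r$ for every subjoin relation $r$; that is, the subjoin only ever ``sees'' $X$ through the proper fragments $X \cap r$, whose union is $X$. Since $X$ consists of attributes shared between the external relation $v$ and the subjoin, $X$ lies among the boundary attributes and is therefore retained in the output of both $J_S$ and $J^c_S$. Moreover $v \in J^c_S$ and $X \subseteq v$, so the $X$-values of every tuple of $J^c_S(D)$ are the $X$-values of some tuple of $v$. Consequently it suffices to exhibit a consistent (fully reduced) $D$ for which $J_S(D)$ contains a tuple whose projection onto $X$ is \emph{not} the projection onto $X$ of any tuple of $v$: such a tuple cannot agree with any tuple of $J^c_S(D)$ on $X$ and is therefore dangling. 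This reduces the claim to building a fully reduced $D$ on which $J_S$ \emph{overproduces} on $X$ beyond $\pi_X(v)$.

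\textbf{Constructing the seed and chasing.} Because every fragment $X \cap r$ is a proper subset of $X$ while $\bigcup_{r} (X \cap r) = X$, the join dependency on $X$ induced by these fragments is nontrivial, so there is a relation $R$ over $X$ with $R \subsetneq \bowtie_{r \in J_S} \pi_{X \cap r}(R)$; the cyclic ``rotation'' relation of Example~\ref{ex-begin} and the two-value relations of Theorem~\ref{thm-sh-attr-cedb} are the prototypical witnesses. I would first populate $v$ so that $\pi_X(v) = R$, assigning fresh constants to the attributes of $v$ outside $X$, and leave every other relation empty. I would then \emph{chase}: using the tuple-generating dependencies expressing ``every tuple of every relation extends to a tuple of the full join $J$'' (equivalently, running the standard semijoin program that enforces pairwise consistency), I add to the remaining relations exactly the tuples forced by these dependencies. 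Acyclicity of $J$ (Definition~\ref{dfn-acyclicc}) is used here twice, as a standard property of acyclic schemas: it guarantees the chase terminates with a finite instance, and it guarantees that the resulting pairwise-consistent instance is in fact globally consistent, i.e.\ fully reduced. Let $D$ be the instance so obtained.

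\textbf{Main obstacle.} The crux is to show that the chase neither blocks the intended dangling tuple nor repairs it. First, the fabrication must survive: the non-$X$ attributes of the subjoin relations must not re-correlate the fragments of $X$. I would control this through $T$, assigning fresh values to the non-$X$ attributes lying on the tree paths that connect the $X$-fragments, so that the \emph{only} relation gluing these fragments together is $v$ itself. Since $v \notin J_S$, the subjoin is then forced to combine the fragments essentially by a product and produces an $X$-tuple in $\bowtie_{r} \pi_{X \cap r}(R) \setminus R$, exactly as in the warm-up of Theorem~\ref{thm-sh-attr-cedb}. Second, and more delicate, I must show that the chase does not enlarge $\pi_X(v)$ to contain this fabricated combination (which would rescue it): I would maintain as an invariant that every tuple the chase adds agrees on $X$ with a pre-existing tuple of $v$, so that $\pi_X(v)=R$ throughout. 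This is exactly where the hypothesis that $v$ has \emph{no} associated subjoin node is indispensable: if some subjoin relation contained all of $X$, the chase could propagate the fabricated combination back into $v$, whereas with $X$ split across fragments no chase step can force it. Granting this invariant, the fabricated subjoin tuple projects outside $R=\pi_X(v)$ and so, by the first paragraph, is dangling with respect to $J^c_S(D)$, proving $J_S$ non-safe. I expect the termination-and-invariance analysis of the chase, carried out along the parse tree, to be the main technical burden; the existence of $R$ and the product-style fabrication are comparatively routine given Theorem~\ref{thm-sh-attr-cedb}.
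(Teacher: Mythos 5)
Your reduction (it suffices to exhibit a fully reduced $D$ together with a tuple of $J_S(D)$ whose $X$-projection lies outside $\pi_X(v)$) is sound, and your existence argument for a relation $R$ violating the join dependency $\bowtie[X\cap r : r\in J_S]$ is the standard fact and is correct. The proof breaks exactly at the point you flag as the main obstacle, and it breaks because the mechanism you propose there is backwards. You claim that assigning \emph{fresh} values to the non-$X$ attributes forces the subjoin to combine the $X$-fragments ``essentially by a product.'' Freshness has the opposite effect: if two subjoin relations share an attribute outside $X$, then giving that attribute a distinct fresh value for each source tuple of $v$ means that tuples originating from different source tuples of $v$ never join at all, so the cross-combinations you need are filtered out rather than created. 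Concretely, take $J = v(A,B)\bowtie w(A,B,C,D)\bowtie r_1(A,C)\bowtie r_2(B,D)\bowtie r_3(C,D)$ with subjoin $J_S = r_1\bowtie r_2\bowtie r_3$. This is acyclic (parse tree: root $w$ with children $v,r_1,r_2,r_3$), and $v$ has no associated node since $X=\{A,B\}$ and no subjoin relation contains both $A$ and $B$. Seeding $v$ with $R=\{(a_1,b_1),(a_2,b_2)\}$ and chasing as you describe yields $r_1=\{(a_1,c_1),(a_2,c_2)\}$, $r_2=\{(b_1,d_1),(b_2,d_2)\}$, $r_3=\{(c_1,d_1),(c_2,d_2)\}$, whence $J_S(D)=\{(a_1,b_1,c_1,d_1),(a_2,b_2,c_2,d_2)\}$: its $X$-projection is exactly $R$, every tuple of $J_S(D)$ joins with $J^c_S(D)=v\bowtie w$, and your intended dangling tuple never materializes. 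What the fabrication actually requires is that the non-$X$ attributes shared among subjoin relations carry \emph{equal} values across the different source tuples, not fresh ones.

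Once this is seen, the repair is to drop the chase entirely and pad with a single constant: let $c$ fix one constant per attribute outside $X$, and let $D$ consist of the projections of $\{\,t \cup c : t\in R\,\}$ onto all relations of $J$. Then $D$ is fully reduced by construction (every relation tuple is a projection of a full join tuple), $\pi_X(v(D))=R$, and since the non-$X$ attributes never discriminate, $\pi_X(J_S(D)) = \bowtie_{r\in J_S}\pi_{X\cap r}(R)\supsetneq R$; any tuple in the difference is dangling by your own reduction. This corrected argument is also genuinely different from the paper's: there, one seeds the \emph{subjoin} relations with the projections of a single seed tuple, chases up and down the parse tree with child-to-parent and parent-to-child tgds, and shows that the seed tuple (which $J_S$ necessarily outputs) is dangling, because an external node with no associated node can never acquire a tuple all of whose boundary attributes carry seed values. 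Your invariant discussion (that the hypothesis prevents the chase from ``repairing'' $\pi_X(v)$) is a secondary issue; in your construction that invariant holds trivially, and the hypothesis is really consumed where you use it first, namely in guaranteeing that all fragments $X\cap r$ are proper so that $R$ exists.
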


The high level description of the algorithm that constructs the counterexample database is: a) we define a set of child-to-parent and parent-to-child tuple generating dependencies (tgds, for short) b) we construct a seed database by populating the relations in the join with some tuple and c) we apply the chase algorithm on the seed database using the tgds we constructed in order to construct finally the counterexample database. 

\subsection{Construct the child-to-parent and parent-to-child tgds}

{\em Tuple generating dependencies (tgd's for short) } that we use here are first order formulas of the form\footnote{their definition is more general than that, but we do not need it here} 
$$r_i(x_1, \ldots, x_n, y_1, \ldots, y_k)\rightarrow r_j(y_1, \ldots, y_k,z_1,\ldots, z_m)$$ where $r_i$ and $r_j$ are relations and the 
$x_l$'s, $y_l$'s and $z_l$'s are variables that represent their attributes. We call the $z_l$'s {\em existentially quantified} variables.
We say that such a tgd is {\em satisfied} in a database instance if whenever there is a tuple $(a_1, \ldots, a_n, b_1, \ldots, b_k)$ (where $a_1, \ldots, a_n, b_1, \ldots, b_k$ are constant values) in the relation $r_i$ then there are constant values $c_1,\ldots, c_m$ such that there is a tuple
$(b_1, \ldots, b_k,c_1,\ldots, c_m)$  in the relation $r_j$. 

A  {\em chase step} considers a tgd like the above and if there is a tuple $(a_1, \ldots, a_n, b_1, \ldots, b_k)$ (where $a_1, \ldots, a_n, b_1, \ldots, b_k$ are constant values) in the relation $r_i$ and there are no constant values $c_1,\ldots, c_m$ such that there is a tuple
$(b_1, \ldots, b_k,c_1,\ldots, c_m)$ in the relation $r_j$ we do as follows:  We add a tuple $(b_1, \ldots, b_k,c_1,\ldots, c_m)$ in the relation $r_j$, where
$c_1,\ldots, c_m$  are distinct fresh constant values that have not appeared before in the database instance. 

When  the chase algorithm is described in the literature,  labelled nulls are used used instead of distinct fresh constants. Here, we have chosen to replace them by fresh constants in order to keep the terminology simple, since it does not make any difference as long as the fresh constant chosen (arbitrarily) is different from any other constant in the database instance.

The algorithm {\em chase} is a series of chase steps. We say that the chase {\em terminates} if there no more chase steps to be applied, i.e., the tgds are satisfied on the database created by the chase algorithm.

We consider a  parse tree, $T$, of the join.
Suppose $r$ is a parent and $r'$ is one of its children on $T$. For this pair of nodes of the parse tree we construct two tgds:
The child-to-parent tgd is of the form $r' \rightarrow r$ and the parent-to-child tgd is of the form
$r \rightarrow r'$. In both, the  attributes/variables shared between the two nodes of the parse tree (the ones that represent the child and its parent) are the same on both sides of the tgd, while the nonshared variables are existentially quantified in the child-to-parent tgd when they belong only to the parent and, in the parent-to-child tgd when they belong only to the child.
More specifically, we define a parent-to child tgd to be: $$r(X_1, X_2, \ldots, Y_1, Y_2,\ldots)\rightarrow r'(Y_1, Y_2,\ldots, Z_1, Z_2, \ldots )$$
where $r$ is the parent of $r'$ and the $X_i$s belong only to the parent whereas the $Y_i$s belong to both $r$ and $r'$, and the $Z_i$s belong only to the child. Without loss of generality, we assume that the $Y_i$s appears in the first positions in $r'$ and in the last positions in $r$. Similarly we define a child-to-parent tgd, only now the child appears on the left hand side (lhs, for short) and the parent on the right hand side (rhs, for short) of the tgd. We form such tgds for each pair of child-parent on the parse tree. This is the set $\Sigma$ of tgds that we will use. The set $\Sigma$ is not unique to the join, it depends on the parse tree considered.

\subsection{Construct the counterexample database instance by chase}
\label{subsec-counter-db-chase}
We use the above constructed set $\Sigma$ of tgds and chase  with $\Sigma$   a seed database instance (that we will construct shortly) to build the database instance which will serve as proof that the subjoin is not safe for this case, i.e., we form the counterexample database. Specifically, we do as follows:

\begin{itemize}
\item
First, we construct the {\bf seed database instance} as follows: We add a single tuple in each subjoin relation. This tuple is created as follows: We imagine that we have a {\em subjoin seed tuple} $t$ (chosen arbitrarily) on the attributes of the subjoin and populate each subjoin relation with one {\em seed tuple} which is the projection of $t$ on the attributes of the specific relation we are populating. The values that we chose in the subjoin seed tuple are called 
{\em seed values}.
\item
Then we use the child-to-parent and parent-to-child set $\Sigma$ of tgds and apply the chase algorithm.
\end{itemize}

 Example  \ref{ex-chase}  illustrates  the construction of the tgds as well as and the construction of the counterexample database.

\begin{example}
\label{ex-chase}

We consider the join $J=ABC\bowtie AB \bowtie AC \bowtie BC$ and the parse tree $T$ which has root the relation $ABC$ and there are three children nodes of the root, which are the rest of the relations. We consider the subjoin $J_S= AB \bowtie AC \bowtie BC$.
First we observe that the relation $ABC$ is an external relation which does not have an associated relation in the subjoin,  because none of the three relations in the subjoin contains all attributes that are contained in the relation $ABC$.

Now, we construct the tgds in $\Sigma$, assuming that relation $r_0$ is $ABC$ and relations
$r_1,r_2,r_3$ are the $AB,AC,BC$ respectively.

~~~~~~~~$d_1:~~r_1(x,y)\rightarrow r_0(x,y,z)$, ~~~~~$d_4:~~r_0(x,y,z) \rightarrow r_1(x,y)$

~~~~~~~~$d_2:~~r_2(x,z)\rightarrow r_0(x,y,z)$, ~~~~~$d_5:~~r_0(x,y,z) \rightarrow r_2(x,z)$

~~~~~~~~$d_3:~~r_3(y,z)\rightarrow r_0(x,y,z)$,  ~~~~~$d_6:~~r_0(x,y,z) \rightarrow r_3(y,z)$,

The seed database (assuming we start with subjoin seed tuple $(a,b,c)$) is:
The relation $r_1=AB$ contains the tuple $(a,b)$, the relation $r_2=AC$ contains the tuple $(a,c)$ and the relation $r_3=BC$ contains the tuple $(b,c)$.

Now we apply three chase steps  using tgds  $d_1,d_2$ and $d_3$ and populate the relation $r_0=ABC$ with the following tuples: $\{  (a,b,c_1), (a,b_1,c), (a_1,b,c)   \}$.
Next, we apply three chase steps  using tgds $d_4,d_5$ and $d_6$ and populate the relations $r_1, r_2, r_3$ with more tuples as follows: We add to $r_1$ two tuples
$\{ (a,b_1), (a_1,b) \}$, we add to $r_2$ two tuples
$\{ (a,c_1), (a_1,c) \}$, and we add to $r_1$ two tuples
$\{ (b,c_1), (b_1,c) \}$. That completes the construction of the counterexample database $D$.  Notice that it is the same database instance as the one we discussed in Example \ref{ex-begin}.
\end{example}

\subsection{Proof that {$\mathbf D$} is indeed a counterexample database}

We will show now that  the chase terminates and produces a database instance which is fully reduced.
\begin{theorem}
Consider an acyclic join $J$,  a parse tree $T$ of $J$ and the set $\Sigma$ of tgds constructed as in Subsection \ref{subsec-counter-db-chase}.
Then the chase using $\Sigma$  terminates when applied  on the seed database instance  and the database instance $D$ that is produced is fully reduced.
\end{theorem}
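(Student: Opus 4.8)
The plan is to prove both assertions at once by exhibiting the explicit instance to which the chase converges. The key object is, for each subjoin relation $r_s$ carrying a seed tuple $t_{r_s}=\pi_{r_s}(t)$, the \emph{global tuple} $g_{r_s}$ on \emph{all} attributes of $J$ obtained by propagating $t_{r_s}$ throughout $T$: we set $g_{r_s}[A]=t_{r_s}[A]$ when $A$ is an attribute of $r_s$, and otherwise let $g_{r_s}[A]$ be the fresh value the chase mints the first time the propagation of $t_{r_s}$ reaches the set of nodes containing $A$. Because $J$ is acyclic, Definition~\ref{dfn-acyclicc} guarantees that the nodes containing $A$ form a connected subtree of $T$, so the propagation enters that subtree through a \emph{unique} gateway node; hence $g_{r_s}[A]$ is a single well-defined value, and thereafter it is preserved along every tree edge whose two endpoints both contain $A$. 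Thus each $g_{r_s}$ is a single tuple over the full attribute set of $J$.

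First I would prove termination by showing that the only tuples the chase can ever add to a relation $r$ are the projections $\pi_r(g_{r_s})$, one per subjoin relation $r_s$. Since every tgd in $\Sigma$ has a single relation on its left-hand side, each tuple produced by a chase step descends from exactly one source tuple; so by induction on chase steps every generated tuple inherits its values from a single seed $t_{r_s}$ transported along the unique path of $T$ from $r_s$ to $r$, and by the running-intersection (connectedness) property this transported tuple is exactly $\pi_r(g_{r_s})$. Consequently each relation holds at most (number of subjoin relations) tuples, the total number of reachable tuples is finite, and therefore every chase sequence—each of whose steps adds a tuple—must halt. The delicate point, and the main obstacle, is to rule out \emph{spurious} copies of a transported tuple carrying fresh values different from those of $g_{r_s}$; this is exactly where acyclicity is indispensable, because the bidirectional child-to-parent and parent-to-child dependencies defeat the standard weak-acyclicity criterion. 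The resolution is that the unique tree path forces each intermediate transported tuple to be created only once: whenever a later chase step would recreate $\pi_r(g_{r_s})$, the already-present copy agrees with it on the shared attributes $r\cap r'$ across the firing edge, so by the definition of a chase step the dependency is already satisfied and no new constant is introduced. In particular, once the instance $\{\pi_r(g_{r_s})\}_{r,r_s}$ is present, the two projections at every tree edge coincide (both are projections of the same $g_{r_s}$), every tgd in $\Sigma$ is satisfied, and the chase stops.

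Second, I would read full reducedness directly off the global tuples, which avoids any separate appeal to pairwise-to-global consistency. Regard each $g_{r_s}$ as an assignment to all attributes of $J$. Since the propagation of $t_{r_s}$ reaches every node of the connected tree $T$, we have $\pi_r(g_{r_s})\in r$ for \emph{every} relation $r$ of $J$, which is precisely the statement that $g_{r_s}$ is a tuple of the full join $J(D)$. By the termination analysis, every tuple of every relation $r$ equals $\pi_r(g_{r_s})$ for some subjoin $r_s$, hence is the projection onto $r$ of the join tuple $g_{r_s}$. Therefore $\pi_r(J(D))=r$ for every relation $r$; that is, every relation instance is the projection of the output of $J$ on $D$, so $D$ is consistent, i.e.\ fully reduced.

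In summary, the whole argument rests on the single structural lemma that the chase output is $\{\pi_r(g_{r_s})\}$, from which termination follows by finiteness and consistency follows because each $g_{r_s}$ is literally a full-join witness. I expect the hardest part to be the careful bookkeeping in the inductive structural lemma—specifically verifying that the bidirectional dependencies never generate a transported tuple with conflicting fresh values—and I would discharge it using the uniqueness of the gateway node for each attribute, which is guaranteed by acyclicity.
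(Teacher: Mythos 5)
Your termination argument is essentially sound, and it takes a different route from the paper's: you bound the chase by observing that every created tuple descends from a unique seed tuple along the unique tree path (descents cannot backtrack, since the parent tuple itself blocks the reverse tgd), giving at most one tuple per (relation, seed) pair and hence finiteness. The paper instead schedules the chase in an upward phase followed by a downward phase and inducts on the size of the parse tree.

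The consistency half, however, has a genuine gap: your structural lemma that ``the propagation of $t_{r_s}$ reaches every node of $T$,'' so that every $g_{r_s}$ is a total tuple all of whose projections are present, is false. A chase step fires only if \emph{no} tuple of the target relation matches the source tuple on the shared attributes, and the blocking tuple can come from a \emph{different} seed; this happens exactly when the shared attributes of the edge being fired carry only seed values. Concretely, take $J=R_1(A,B)\bowtie R_2(A,B,C)\bowtie R_3(A,D)$ with parse tree rooted at $R_2$ with children $R_1$ and $R_3$, subjoin $\{R_1,R_3\}$, and subjoin seed $(a,b,d)$. The step $R_1\rightarrow R_2$ creates $(a,b,c_1)$; the step $R_3\rightarrow R_2$ is then blocked, because $R_2\cap R_3=\{A\}$ and $(a,b,c_1)$ already matches $(a,d)$ on $A$; the downward steps are likewise blocked by the seed tuples. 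The chase stops with $R_1=\{(a,b)\}$, $R_2=\{(a,b,c_1)\}$, $R_3=\{(a,d)\}$. This database \emph{is} fully reduced, but its only join witness $(a,b,c_1,d)$ is a \emph{merge} of the two descents: $g_{R_1}$ has no $D$-value and $g_{R_3}$ never reached $R_2$, so neither global tuple exists as you defined it, and the claim ``$\pi_r(g_{r_s})\in r$ for every relation $r$'' fails. (The same cross-seed blocking also occurs in instances satisfying the hypothesis of Section~\ref{sec-5}, i.e., with an external node lacking an associated node, so it cannot be excluded by restricting attention to that case.) You anticipated spurious same-seed copies but not cross-seed blocking, which is the actual difficulty. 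The repair is to argue from the fixpoint property alone: at termination every tgd of $\Sigma$ is satisfied, and tgd-satisfaction along tree edges lets one extend any tuple to a full join tuple by induction over the parse tree --- equivalently, as the paper does, it implies that the semijoin program of Theorem~\ref{thm-semijoinprogram} deletes nothing, hence $D$ is fully reduced.
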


\begin{proof}
It is convenient to argue about termination if we apply the chase in a certain order.  We apply chase in two phases, one phase upwards in the parse tree and one phase downwards as follows:  In the upwards phase, we apply the child-to-parent tgds bottom up. In the downwards phase, we apply the parent-to-child tgds top down. We will prove that this two-phase chase produces a database on which  all tgds in $\Sigma$ are satisfied.

Inductively, suppose the chase terminates on a  parse tree with less than $n$ nodes. Now, consider a parse tree, $T$,  with $n$ nodes. 
In the upwards phase of the chase, the root of $T$ is populated with some tuples because of child-to-parent tgds  with its children, hence these tgds are now satisfied. In the downwards phase of the chase, the children of the root are populated with some tuples, hence the parent-to-child tgds with its children are satisfied and the extra tuples do not trigger dissatisfaction of child-to-parent tgds because they are  produced only from the tuples of one node (the parent) and they all satisfy the parent-to-child tgd, by construction of the tgds (notice the symmetry between the two tgds of the same pair of nodes).  The chase terminates on the subtrees rooted at the children of the root, by inductive hypothesis, hence it terminates on the parse tree with $n$ nodes as well.

Now we need to prove that, if the tgds in $\Sigma$ are satisfied on database $D$, then $D$ is fully reduced. 
We use Theorem \ref{thm-semijoinprogram} and {\bf Procedure} {\sl Semijoin}.  
We will prove that 
 the  {\bf Procedure} {\sl Semijoin} does not delete any tuples in $D$. 

We argue recursively on the parse tree $T$.
Let $D$ be a database instance of relations on a specific partial subtree $T'$ of $T$. 
Let $r$  be a leaf relation in $T'$. 
Recursively suppose  database $D'=D-r$ is fully reduced.  


Suppose relation $r$ has a dangling tuple in $D$ (which the semijoin procedure will delete in its downwards phase). This means however that the specific tgd with the parent of $r$ is not satisfied. 
Suppose the parent of $r$ has a dangling tuple. In this case the tgd with respect to its parent is not satisfied. 
Hence, $D$ is fully reduced too.
%
\end{proof}


We have proven that $D$ is fully reduced.
Now, it remains to be proven  that output of the subjoin on $D$ has a dangling tuple with respect to the output of the complement of the subjoin on $D$ (i.e., $J_S(D)$ has a dangling tuples with respect to $J^c_S(D)$). This is a straightforward consequence of the following theorem:

%
%
%
%
%
%

\begin{theorem}   
Consider an acyclic join $J$,  a parse tree $T$ of $J$ and the set $\Sigma$ of tgds constructed as in Subsection \ref{subsec-counter-db-chase}.
The  chase using $\Sigma$  when applied  on the seed database instance produces a database instance $D$ for which the following is true:

The output of the subjoin on $D$ includes the seed relation tuple projected on its output attributes but the output of the complement  subjoin on $D$  does not include a tuple whose projection on the boundary attributes   is the seed relation tuple projected on these boundary attributes.
\end{theorem}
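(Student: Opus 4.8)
The plan is to prove the two assertions separately, the first being essentially immediate and the second carrying all the difficulty. For the first assertion, I would observe that the seed tuples placed in the subjoin relations when building the seed database are never removed, since the chase only ever adds tuples. By construction these are the projections of the single subjoin seed tuple $t$ onto the respective relation schemas, so they agree pairwise on every shared attribute. Consequently they join to reconstruct $t$ restricted to the subjoin attributes, and projecting onto the output attributes of $J_S$ yields exactly $t$ projected on those attributes. Hence $J_S(D)$ contains the seed relation tuple projected on its output attributes.

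For the second assertion I would argue by a provenance (chain-of-derivation) analysis of the chase. Fix the external relation $r$ that has no associated subjoin node, and let $A_r$ be the set of boundary attributes contained in $r$, equivalently the subjoin attributes appearing in $r$; by hypothesis no subjoin relation contains all of $A_r$. Because every tgd in $\Sigma$ has a single atom on each side, every tuple produced by the chase is created from exactly one triggering tuple in a relation adjacent to it in the parse tree, and the step that creates $\tau$ in a relation $R$ from a tuple $\tau'$ in an adjacent relation $R''$ copies the values on $R \cap R''$ from $\tau'$ while assigning fresh constants (distinct from all seed values by construction) to the attributes in $R \setminus R''$. Tracking, for each chased tuple $\tau$, the set $\{X \in A_r : \tau(X)=t(X)\}$ of attributes carrying their correct seed value, this observation shows that if $\tau$ carries the seed value on all of $A_r$, then $A_r \subseteq R''$ and $\tau'$ already carries the seed value on all of $A_r$. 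The heart of the proof is the resulting claim that no tuple of $r$ carries the seed value on all of $A_r$ simultaneously: iterating the previous implication back along the (finite, acyclic) derivation chain forces every relation along the chain to contain $A_r$, and since the chain originates at a seed tuple living in some subjoin relation $s$, we get $s \supseteq A_r$, i.e. $s$ is an associated subjoin node of $r$ — contradicting the hypothesis.

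I would then finish the second assertion by contradiction. Suppose $J^c_S(D)$ contained a tuple $\sigma$ whose projection on the boundary attributes equals $t$ projected on those attributes. Since $r$ is one of the relations of the complement subjoin and the boundary attributes (in particular all of $A_r$) are among the projected output attributes of $J^c_S$, the witness assignment producing $\sigma$ restricts on $r$ to a tuple of $r$ carrying the correct seed value on every attribute of $A_r$. This contradicts the claim, so no such $\sigma$ exists, which is exactly the second assertion.

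The main obstacle is the provenance claim for $r$: making precise that the single-atom form of the tgds keeps each chased tuple's derivation a single chain back to one seed tuple, that the set of seed-valued attributes can therefore only shrink by intersection with shared attributes along that chain, and that demanding all of $A_r$ to survive forces the originating subjoin relation to contain $A_r$ — thereby exhibiting precisely the associated node whose nonexistence was assumed. Care is also needed to confirm that freshly introduced constants are genuinely distinct from every seed value, so that ``carrying the seed value'' is well defined positionwise and cannot be produced by coincidence.
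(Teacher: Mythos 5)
Your proof is correct and follows essentially the same route as the paper: an induction (provenance chain) over chase steps, exploiting that each tgd has a single atom per side and that fresh constants never equal seed values, to conclude that the external relation $r$ with no associated subjoin node can never contain a tuple carrying seed values on all of $A_r$, whence no tuple of the complement subjoin's output matches the seed on the boundary attributes. If anything, your invariant that fixes the set $A_r$ and tracks it along the derivation chain is a sharper formulation than the paper's sketch, which inducts on the looser statement that any node populated with all-seed boundary values ``has an associated node.''
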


\begin{proof}
When the first chase step is applied then the relations/nodes that are populated with a tuple where all the boundary attributes have seed values then this means that this node has an associated node which is the node which was used for this chase step. Iteratively, this is the case for each node when the $i$-th step is applied. Since there is a node with no associated subjoin node, this node has all its tuples with at least one  boundary attribute having a non-seed value.  Hence, in each tuple of the output of the complement subjoin there is at least one subjoin attribute that has a non-seed value.
%
%
%
%
%
%
\end{proof}


%

\section{All External Nodes Have Associated Nodes in the Subjoin 
}
\label{sec-counter-shared-sec}

Now we assume that, for every external node $u$, there is at least one subjoin node that contains all the subjoin attributes of $u$. Remember, we call such a subjoin node an associated node  of $u$.
Each external node may have multiple associated nodes.

This section describes one iteration in the case where all external nodes have associated nodes in the subjoin. It considers as input a subjoin and a parse tree and in the output, either a decision is made that the subjoin is not safe or,  it outputs a different parse tree, on which the subjoin  has strictly fewer maximal subtrees than the parse tree in the input. The next subsection presents some definitions.

\subsection{Lowest maximal subtrees, stems, siblings}
\label{subsec-pick-subtree}
The following definition allows for a convenient property. Informally,  this property allows a simple transformation of the parse tree by moving the chosen maximal subtree to another position without ``carrying '' with it other maximal subtrees and thus introducing complications unnecessarily. 

\begin{definition}({\bf Lowest maximal subtree})
A {\em lowest maximal subtree} is a maximal subtree  such that it has no node with a descendant that is a root of another
maximal subtree.
\end{definition}

Proposition \ref{pro-not-disturb} states that a lowest maximal subtree exists.

\begin{proposition}
\label{pro-not-disturb}A maximal subtree, $T$, with the greatest depth is a lowest
maximal subtree.
\end{proposition}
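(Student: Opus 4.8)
The plan is to read ``greatest depth'' as the depth of the \emph{root} of the maximal subtree, and then argue by contradiction. Let $T_1$ be a maximal subtree whose root $R_1$ has the greatest depth among the roots $R_1, R_2, \ldots$ of all maximal subtrees of the subjoin in the given parse tree (such a subtree exists since there are only finitely many maximal subtrees). First I would suppose, towards a contradiction, that $T_1$ is not a lowest maximal subtree. By the definition of a lowest maximal subtree, this means that $T_1$ contains some node $v$ having a descendant $w$ that is the root $R_j$ of a different maximal subtree $T_j$, with $j \neq 1$.

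Next I would locate $R_j$ in the tree relative to $R_1$ by a simple depth comparison. Since $v$ is a node of $T_1$ and $R_1$ is its root (hence the shallowest node of $T_1$), the depth of $v$ is at least the depth of $R_1$. The node $w = R_j$ is a descendant of $v$, and it cannot equal $v$: by the Observation following the definition of maximal subtrees, a root of a maximal subtree is neither equal to, nor a child of, a node of another maximal subtree, so in particular $R_j \neq v$. Hence $w$ is a proper descendant of $v$ and is therefore strictly deeper than $v$, which in turn is at least as deep as $R_1$. Consequently the depth of $R_j$ strictly exceeds the depth of $R_1$, contradicting the choice of $T_1$ as the maximal subtree whose root has the greatest depth. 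This contradiction establishes that $T_1$ is a lowest maximal subtree, which is exactly the claim.

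The main obstacle I anticipate is not in the argument but in making the notion of ``depth of a maximal subtree'' precise: the statement is true only when depth is measured at the root, i.e.\ at the shallowest node of the subtree, and not at its deepest node. A maximal subtree with a deep internal node can still sit \emph{above} another maximal subtree (separated from it by an external node, so that the two do not merge), and would then fail to be lowest despite containing a very deep vertex; so the root-depth reading is the one that makes the proposition hold. Once this interpretation is fixed, the only delicate point is ruling out the degenerate case $w = v$, and this is handled cleanly by the cited Observation; the remainder is the immediate comparison of depths along the root-to-descendant path described above.
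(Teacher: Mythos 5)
Your proof is correct and follows essentially the same route as the paper's: assume $T_1$ is not lowest, exhibit a root of another maximal subtree strictly deeper than $R_1$, and contradict the maximal-depth choice, with ``depth'' read at the root exactly as the paper intends (``the greatest depth of its root to the root of the whole tree''). Your only addition is to explicitly rule out the degenerate case $w=v$ via the Observation, a detail the paper leaves implicit, so the two arguments coincide.
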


\begin{proof}
Suppose $T$ is not a lowest
maximal subtree. Then a maximal subtree exists which has
 a root that is a descendant of a node of $T$. Hence, it  has depth greater than
the depth of the root of $T$. This is a contradiction, since we chose $T$ to have the greatest depth of its root to the
root of the whole tree.
\end{proof}

\begin{definition}({\bf Stem})
\label{dfn-stem}
Let $T$ be a lowest maximal subtree.
The path from $R$, the root of $T$, to the root of the whole tree has a node $v$
which is the uppermost node
that has the property: the part of the
path, call it $p$, from $R$ to $v$ is such that every node of $p$ has no descendant that is a root of a maximal subtree.

This path $p$ is called the {\em stem} of $T$. Node $v$ is called the {\em upper tip}, or simply {\em tip}  of the stem. The root of $T$ and $v$ are the {\em endpoints} of the stem. See   Appendix \ref{app-ex} for an example.

\end{definition}
The definition of a break is the same as  Definition \ref{dfn-breakk} where path $p$ is a stem of a lowest maximal subtree and it becomes partial path $p$ be deleting all shared attributes.

Notice that the upper tip of a stem falls in one of the following two cases:

(i) It is a node of another maximal subtree $T_a$. In this case we say that $T$ is {\em hanging} from $T_a$. We call this maximal subtree {\em dependant}.

(ii) It is an external node.  We call this maximal subtree {\em not dependant}.


The upper tip of a stem (and, hence the stem) can be equivalently defined as the lowest common ansector (LCA), over all other maximal subtrees,  of the root of the tree under consideration and another maximal subtree.

\begin{definition}({\bf Siblings})
Two lowest not dependant maximal subtrees that have the same upper tip of their stems are callled {\em siblings}.
\end{definition}

The following proposition states that when the upper tip of a stem is an external node, then
we can always find two siblings.

\begin{proposition}
\label{pro-AB}
Suppose there no maximal subtrees that are dependant.
Suppose there exists a lowest maximal subtree whose upper tip of the stem is an external node.
Suppose the subjoin has at least two maximal subtrees. Then there are at least two lowest maximal subtrees $T_1$ and $T_2$ that are siblings.
\end{proposition}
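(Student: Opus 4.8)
The plan is to argue by an extremal choice: among all pairs of distinct maximal subtrees, pick a pair $(T_1,T_2)$ whose roots $R_1,R_2$ have the deepest lowest common ancestor $m=\mathrm{LCA}(R_1,R_2)$ in $T$ (such a pair exists since there are finitely many maximal subtrees and, by hypothesis, at least two). I will then show that this extremal pair is in fact a pair of siblings. The engine of the argument is the characterization stated right after Definition \ref{dfn-stem}, namely that the upper tip of a lowest maximal subtree $T_a$ is exactly the deepest node of the form $\mathrm{LCA}(R_a,R_b)$ taken over all other maximal subtrees $T_b$; note that all such LCAs lie on the single root-to-$R_a$ path, so the deepest one is a well-defined node.

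First I would establish that $m$ is a \emph{proper} common ancestor of $R_1$ and $R_2$, i.e.\ $m\neq R_1$ and $m\neq R_2$. This is the crux, and it is where the ``no dependant subtrees'' hypothesis enters. Suppose instead $m=R_1$, so $R_2$ is a proper descendant of $R_1$. If some maximal subtree root lies strictly below a node of $T_2$, then $T_2$ is not lowest and that root $R_3$ satisfies $\mathrm{LCA}(R_2,R_3)=R_2$, which is strictly deeper than $m=R_1$, contradicting the maximal choice of $m$. Otherwise $T_2$ is lowest, and by maximality of $m$ no LCA involving $R_2$ is deeper than $R_1$; since all LCAs from $R_2$ lie on one path, the deepest is exactly $R_1$, so the tip of $T_2$ equals $R_1$. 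But $R_1$ is a subjoin node, so $T_2$ would be dependant, contradicting the hypothesis. Hence $m\neq R_1$, and symmetrically $m\neq R_2$.

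Next I would show $T_1$ and $T_2$ are both lowest. If $T_1$ were not lowest, some node of $T_1$ would have a descendant that is a root $R_c$ of another maximal subtree; then $R_c$ is a proper descendant of $R_1$ and $\mathrm{LCA}(R_1,R_c)=R_1$, which is strictly deeper than $m$ (since $m$ is now a proper ancestor of $R_1$), again contradicting maximality; symmetrically for $T_2$. Finally, since $T_1$ is lowest, its tip is the deepest LCA from $R_1$; by maximality this deepest LCA is $m$ itself, so $\mathrm{tip}(T_1)=m$, and likewise $\mathrm{tip}(T_2)=m$. Because no subtree is dependant, $m$ is an external node, so $T_1$ and $T_2$ are two lowest, not dependant maximal subtrees sharing the same upper tip, that is, they are siblings.

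The main obstacle is the first step: ruling out that the extremal LCA coincides with one of the roots. The purely extremal part (deepest LCA forces ``lowest'' and identifies the tip) is routine once the tip is read off as a deepest pairwise LCA, but the ancestor case is genuinely special and is resolved only by invoking the no-dependants hypothesis, which forbids a lowest subtree from hanging directly off a subjoin node. This is exactly the configuration that would otherwise let one maximal subtree sit strictly below another without producing a sibling pair.
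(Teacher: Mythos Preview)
Your argument is correct and is essentially the same extremal-depth argument as the paper's, just organized around a slightly different extremal object. The paper picks a single lowest maximal subtree whose stem tip $S_e$ is deepest, then observes that the witness root $R_b$ realizing that tip must itself have stem tip $S_e$ (else $S_e$ was not deepest), giving a sibling. You instead pick a \emph{pair} of maximal subtrees with deepest pairwise $\mathrm{LCA}$ and then deduce that both members are lowest and that this LCA is their common tip. Since the stem tip is, by the remark after Definition~\ref{dfn-stem}, exactly the deepest pairwise LCA from a given root, the two extremal choices coincide; your version simply makes explicit the verification that the second subtree is lowest and spells out where the ``no dependant subtrees'' hypothesis is used (to rule out $m=R_1$), details the paper's terse proof leaves implicit.
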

\begin{proof}
Consider the stem with the lowest upper tip of the stem (i.e., this upper tip is at the greatest depth from the  root of the whole parse tree); call this tip $S_e$.
Suppose there is no stem with its upper tip on $S_e$. Since, there are certainly (otherwise $S_e$ would not have been
ended there but had to go higher in the parse tree) maximal subtrees with roots being descendant of $S_e$, their stem tip should be lower than $S_e$ and, hence, have an
upper tip lower than $S_e$, this is a contradiction.
\end{proof}

%
%
%

\subsection{Proof of the main result of this section}
\label{subsection-dfnBreaks}

Before we proceed, we state the following theorem whose proof is the same as the proof of Proposition \ref{pro-break-reverse} with the only difference that now the parent of the root of tree $T_1 $ is the associated node to the  node which defines a break point.

\begin{theorem}
\label{thm-break-general}
Suppose there is a break in the stem of maximal subtree $T_1$ and suppose that the node of the break point has an associated node in a maximal subtree other than $T_1$. Then, we build a parse tree with strictly fewer maximal subtrees.
\end{theorem}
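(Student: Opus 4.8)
The plan is to mirror the proof of Proposition~\ref{pro-break-reverse} almost verbatim, changing only the target of the final reattachment: instead of sending the root $R_1$ of $T_1$ to the root of a second maximal subtree, I send it to the associated node supplied by the hypothesis. Write $(u,u')$ for the break point in the stem of $T_1$, with $u$ the parent of $u'$ (so $u'$ is closer to $R_1$ and $u$ closer to the tip), let $S$ be the maximal set of attributes shared by all nodes of the stem, and let $w$ be the associated node of $u$, which by hypothesis lies in some maximal subtree $T_j$ with $j\neq 1$.

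First I would record the intersection pattern along the lower part of the stem in order to invoke the reverse path transformation. Since $(u,u')$ is a break point we have $u\cap u' = S$, and for every node $c$ on the path from $u'$ down to $R_1$ the connectedness requirement of Definition~\ref{dfn-acyclicc} forces any attribute of $u\cap c$ to occur at $u'$ as well, so $u\cap c\subseteq u\cap u' = S$; combined with $S\subseteq c$ this gives $u\cap c = S$. Hence the path from $u'$ to $R_1$ satisfies the shared-attributes condition with $P_1=u$, and I may apply Lemma~\ref{stem-lemma}. This produces a parse tree in which $R_1$ is a child of $u$, the subtree rooted at $R_1$ (all of $T_1$, together with any external descendants) is untouched, and $R_1\cap u = S$.

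The second, and main, step is to move the subtree rooted at $R_1$ so that its parent becomes $w$ instead of $u$. The only attributes whose connectedness could be disturbed are those carried by the deleted edge $(u,R_1)$, namely $S$, and these survive because $S\subseteq w$: the attributes of $S$ occur in the subjoin relation $R_1$, so they are subjoin attributes, and they occur in $u$, hence they are subjoin attributes of $u$ and therefore belong to its associated node $w$. A short tree argument then shows connectedness is preserved overall: the new edge $(w,R_1)$ carries $S$; the $S$-bearing nodes inside the moved subtree stay internally connected through $R_1$; and the $u$-to-$w$ path in the remaining tree cannot have passed through the moved subtree, since it would have to enter and leave it through the single edge above $R_1$. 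Attributes strictly internal to the moved subtree are unaffected because the subtree travels intact. Thus the transformed tree is again a parse tree of $J$.

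Finally I would count maximal subtrees. The reverse path transformation rearranges only external stem nodes and keeps $T_1$ intact, creating no new adjacency between subjoin nodes; the reattachment then makes $R_1$ a child of the subjoin node $w\in T_j$, so $T_1$ and $T_j$ become connected through subjoin nodes and merge into one maximal subtree, while every other maximal subtree is left untouched. Hence the new parse tree has strictly fewer maximal subtrees. I expect the delicate point to be precisely the feasibility of the reattachment, i.e.\ verifying $S\subseteq w$ and that detaching and reattaching the subtree rooted at $R_1$ preserves the connectedness of every attribute; this is exactly the analogue of the last sentence of the proof of Proposition~\ref{pro-break-reverse}, now carried out with the associated node $w$ in place of the root of the second maximal subtree.
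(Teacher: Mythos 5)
Your proposal is correct and takes essentially the same route as the paper: the paper's proof of this theorem is literally "same as the proof of Proposition~\ref{pro-break-reverse}, except that the new parent of the root of $T_1$ is the associated node of the break-point node," i.e., reverse path transformation to make $R_1$ a child of $u$, followed by reattaching the subtree rooted at $R_1$ to the associated node $w$. Your write-up simply makes explicit the verifications the paper leaves implicit (the shared-attributes condition along the path from $u'$ to $R_1$, the inclusion $S\subseteq w$ via the definition of associated node, preservation of attribute connectedness under the reattachment, and the strict decrease in the number of maximal subtrees), so it matches the paper's argument in both structure and substance.
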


As we mentioned, a lowest maximal subtree is either dependent or not. The following two theorems prove the main result in this section by considering each of these cases. The proofs of the two theorems have many similarities, so we move the proof of Theorem \ref{thm-counter-good2} in Appedix \ref{app-last}.


\begin{theorem}
\label{thm-counter-good1}
Suppose the subjoin has more than one maximal subtree. 
 Let $T_1$ be a lowest maximal subtree which is dependant. Then either there is a parse tree with strictly fewer 
maximal subtrees in the subjoin  or $T_1$ leads to an n-set $B_S$.
\end{theorem}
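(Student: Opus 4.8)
The plan is to fix the stem of $T_1$ and split on whether its partial path has a break, mirroring the warm-up of Section~\ref{sec-warmup} but exploiting the one feature special to the dependant case: the upper tip $v$ of the stem is itself a subjoin node, since it lies in the maximal subtree $T_a$ from which $T_1$ hangs. Write $R$ for the root of $T_1$, let the stem be $p=R=s_0,s_1,\dots,s_k=v$, and let $S=R\cap v$ be the set of attributes shared by all stem nodes; delete $S$ to obtain the partial path. I would first record two structural facts. The interior nodes $s_1,\dots,s_{k-1}$ are external: a subjoin interior node would lie in some maximal subtree and, being a proper ancestor of $R$, would force the lowest common ancestor of $R$ and another maximal subtree to sit at or below it, contradicting the choice of $v$. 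Also $R\in T_1$ and $v\in T_a$, and $v$ is not the parent of $R$ (otherwise the two subjoin nodes $R$ and $v$ would be adjacent and $T_1,T_a$ would be a single maximal subtree). Throughout write $A_1$ for the subjoin attributes of $T_1$ and $A_{rest}$ for those of $\cup_{i\neq1}T_i$.

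If there is no break, I would take $B_S$ to be the set of all attributes occurring on the stem minus $S$ and show it is an n-set in the sense of Definition~\ref{dfn-leads-to}. For connectedness of the partial join (condition (a)), the absence of a break makes the stem a connected path after deleting $S$, and every relation carrying a $B_S$-attribute $x$ attaches to this path, because the nodes containing $x$ form a connected region of the parse tree that meets the stem. Moreover the stem stays connected all the way to both endpoints, so $R$ and $v$ each carry a $B_S$-attribute and $B_S$ is nonempty with both $T_1$ and the rest represented. For the separation of $T_1$ (condition (b)) I would prove that no $B_S$-attribute lies in both $A_1$ and $A_{rest}$: such an $x$ would occur in a $T_1$-relation, hence (being on the stem with $T_1$ rooted at $R$) on $R$ and contiguously upward to some node $w\neq v$, and also in a relation of another maximal subtree $T_b$; tracing the connected occurrence of $x$ then forces either a maximal-subtree root strictly below a stem node or the survival of $x$ past $w$, each contradicting that $T_1$ is a \emph{lowest} maximal subtree together with the definition of the stem. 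Disjointness of $B_S\cap A_1$ and $B_S\cap A_{rest}$ yields the required disconnection of $T_1$ from $\cup_{i\neq1}T_i$ in the partial subjoin, so $T_1$ leads to $B_S$.

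If there is a break, I would take a break point $(u,u')$ with $u$ the upper node and apply Theorem~\ref{thm-break-general}; all that is needed is that $u$ has an associated node in a maximal subtree other than $T_1$. This is immediate when $u=v$, since $v\in T_a$. Otherwise I would rule out failure directly: if every associated node of $u$ lay in $T_1$, pick one, say $a\in T_1$; since $a$ is a descendant of $u'$ and the break gives $u\cap u'\subseteq S$, every subjoin attribute of $u$ must already appear on $u'$ and hence lie in $S\subseteq v$, so $v$ itself contains all subjoin attributes of $u$ and is an associated node of $u$ lying in $T_a\neq T_1$ — a contradiction. Hence the reduction of Theorem~\ref{thm-break-general} always applies and produces a parse tree with strictly fewer maximal subtrees, establishing the first alternative.

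The step I expect to be the main obstacle is condition (b) in the no-break case: converting ``$T_1$ is a lowest maximal subtree with this stem'' into the clean statement that no non-$S$ stem attribute is simultaneously a $T_1$-attribute and an attribute of another maximal subtree. This is where the geometry of the parse tree — the connected occurrence of each attribute, together with the fact that no maximal-subtree root hangs below an interior stem node — must be used carefully. The break case, by contrast, collapses to the short observation that the subjoin node $v$ can always play the role of the associated node demanded by Theorem~\ref{thm-break-general}, so that a break is never an obstruction to reducing the number of maximal subtrees.
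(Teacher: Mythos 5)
Your proposal is correct and follows essentially the same route as the paper: in the no-break case you take the same n-set $B_S$ (the stem attributes minus the shared set $S$) and verify Definition \ref{dfn-leads-to}, and in the break case you feed the break point into Theorem \ref{thm-break-general} by exhibiting an associated node outside $T_1$, namely the tip $v$. The only variation is presentational: the paper asserts directly that $v$ is the required associated node, whereas you reach the same conclusion by contradiction through a hypothetical associated node $a\in T_1$ (implicitly using the section's standing assumption that every external node has at least one associated node), and you spell out structural facts (interior stem nodes are external, connectivity checks for $B_S$) that the paper leaves implicit.
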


\begin{proof}
%
%
%

Suppose
$T_1$ is hanging from a maximal subtree, let it be $T_2$.
The shared attributes between $T_1$ and $T_2$ are all the attributes that are shared by 
$T_1$ and the rest of the join. Hence, after removing them (and considering the partial hypergraph edges generated),  $T_1$ does not share any attributes with the rest of the subjoin. 

Suppose we have a break in the stem of $T_1$. Any node of the stem of $T_1$ has an associated node in $T_2$ (this node is the upper tip of the stem of $T_1$). Hence the parent of the root of $T_1$ will be the upper tip of the stem of $T_1$, according to Theorem \ref{thm-break-general}.

Suppose there is no break in the stem of $T_1$.
Moreover, 
 the root of $T_1$ and $T_2$ are connected when considering the partial edges otherwise we would have a break.  Hence, if we consider as $B_S$ the set of attributes in the stem of $T_1$ after the removal of the shared attributes,   we observe that set $B_S$  has the properties of  Definition \ref{dfn-leads-to}.  
\end{proof}

\begin{theorem}
\label{thm-counter-good2}
Suppose the subjoin has more than one maximal subtree. 
 Let $T_1$ be a lowest maximal subtree which is non dependant. Then either there is a parse tree with strictly fewer 
maximal subtrees in the subjoin  or $T_1$ leads to an n-set $B_S$.
\end{theorem}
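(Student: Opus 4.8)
The plan is to mirror the proof of Theorem~\ref{thm-counter-good1}, using the sibling supplied by Proposition~\ref{pro-AB} in place of the maximal subtree that $T_1$ hangs from. Since $T_1$ is non dependant, the upper tip $v$ of its stem is an external node. Working in the regime of this section (no dependant maximal subtree remains, so that Proposition~\ref{pro-AB} applies), let $T_2$ be a sibling of $T_1$, i.e.\ a lowest non dependant maximal subtree with the same upper tip $v$. Because every external node has an associated subjoin node, $v$ has an associated node $w_v$; this node lies in exactly one maximal subtree, so after possibly interchanging the names of the two siblings I may assume $w_v \notin T_1$. As in the warmup, I then delete all shared attributes from the stem of $T_1$ to form the partial path and split into two cases according to whether this partial path has a break.

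The key structural observation I would establish first concerns the attributes of a break node. Let $(u,u')$ be the break point on the stem of $T_1$, with $u$ the parent. Since $v$ is the lowest ancestor of $R_1$ having the root of another maximal subtree as a descendant, the tree-path from $R_1$ to the root of any maximal subtree $T_j$, $j\neq 1$, passes through $v$; hence every attribute shared between $T_1$ and some other maximal subtree occurs on $v$. Using this together with the stem property (no node strictly below $v$ on the stem has a descendant that is a root of another maximal subtree), I would argue that all subjoin attributes of $u$ occur on $v$: an attribute of $u$ reaching $T_1$ below $u$ must lie in $u\cap u'$, which the break forces into the set of shared attributes and therefore onto $v$, while an attribute of $u$ reaching another subtree reaches it only by going up through $v$. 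Consequently the subjoin attributes of $u$ are contained in those of $v$, so $w_v$ is an associated node of $u$ as well.

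With this in hand the two cases are short. If there is a break, then $u$ has the associated node $w_v$, which lies in a maximal subtree other than $T_1$; Theorem~\ref{thm-break-general} then produces a parse tree with strictly fewer maximal subtrees. If there is no break, I set $B_S$ to be the set of attributes on the stem of $T_1$ minus the shared attributes and claim that $T_1$ leads to the n-set $B_S$ of Definition~\ref{dfn-leads-to}. Condition (b) is immediate: any attribute common to $T_1$ and another maximal subtree is by definition a shared attribute and hence excluded from $B_S$, so in the partial subjoin generated by $B_S$ the subtree $T_1$ shares no attribute with the rest and is disconnected from it. Condition (a) follows from the absence of a break: the partial stem is then a connected spine from $R_1$ up to $v$, every attribute of $B_S$ sits on this spine, and so every surviving partial edge attaches to it, making the partial join connected. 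Theorem~\ref{thm-sh-attr-cedb} then shows the subjoin is non safe.

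The step I expect to be the main obstacle is the placement of all of the break node's subjoin attributes on $v$, since it is what licenses the use of $w_v$ as the transfer target and thereby reduces the break case to Theorem~\ref{thm-break-general}; it rests on combining the stem property with the fact that $v$ is a common cut vertex for the paths from $R_1$ to every other maximal subtree. The accompanying subtlety is the reduction to the case $w_v\notin T_1$: should the given $T_1$ satisfy $w_v\in T_1$, one instead runs the entire argument on the sibling $T_2$ (for which $w_v\notin T_2$, as the two siblings are disjoint as node sets), obtaining either a parse tree with strictly fewer maximal subtrees or an n-set, which is all the section's procedure requires.
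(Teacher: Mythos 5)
Your skeleton matches the paper's proof (a sibling $T_2$ supplied by Proposition~\ref{pro-AB}, then a break/no-break dichotomy, with Theorem~\ref{thm-break-general} handling the break case), and your break case is in fact argued more carefully than the paper's: showing that every subjoin attribute of the break node must occur on the tip $v$, so that the associated node $w_v$ of $v$ is also an associated node of the break node, is a sound substitute for the paper's case split on whether $R_1$ and $R_2$ contain the same shared attributes. The genuine gap is in your no-break case. You take $B_S$ to be the attributes of the stem of $T_1$ alone (minus shared attributes), but this set need not contain a single attribute of any other maximal subtree; then condition (b) of Definition~\ref{dfn-leads-to} holds only vacuously, with the partial subjoin having nonempty edges on the $T_1$ side only. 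Theorem~\ref{thm-sh-attr-cedb} genuinely needs both sides of the partial subjoin to be nonempty: its dangling tuple is the ``mixed'' tuple of a Cartesian product between the $T_1$ side and the rest, and with one side empty no mixed tuple exists. This is exactly why the paper defines $B_S$ as the attributes of the stems of \emph{both} siblings $T_1$ and $T_2$ (minus shared attributes), and why its no-break case requires that there be no break in \emph{either} stem --- a condition your case analysis never checks on $T_2$.

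Concretely, take the path join $J = AB \bowtie BC \bowtie CD \bowtie DE \bowtie EF$ with subjoin $J_S = AB \bowtie EF$, and root the (unique) parse tree at $CD$. Then $T_1=\{AB\}$ and $T_2=\{EF\}$ are sibling lowest non dependant maximal subtrees with common tip $v=CD$, there are no shared attributes, every external node has an associated node ($CD$ has no subjoin attributes at all), and neither stem has a break. Your $B_S$ is $\{A,B,C,D\}$, and the database of Theorem~\ref{thm-sh-attr-cedb} built from it is $AB=BC=CD=\{(0,0),(1,1)\}$, $DE=\{(0,0),(1,0)\}$, $EF=\{(0,0)\}$. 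Then $J_S(D)=\{(0,0,0,0),(1,1,0,0)\}$ on schema $ABEF$ and $J^c_S(D)=\{(0,0,0,0),(1,1,1,0)\}$ on schema $BCDE$, and every tuple of $J_S(D)$ joins with a tuple of $J^c_S(D)$: there is no dangling tuple, so non-safety is not witnessed --- even though this subjoin is indeed non-safe, since the path is the only parse tree of $J$ and so $\{AB,EF\}$ can never form a single partial subtree (Theorem~\ref{thm-main}). The paper's $B_S=\{A,B,C,D,E,F\}$, built from both stems, does produce the dangling tuple $(1,1,0,0)$. So your break case stands, but your no-break case must fold the sibling's stem into $B_S$, and must first verify that the sibling's stem also has no break, before Theorem~\ref{thm-sh-attr-cedb} can be invoked.
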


The two above theorems and Theorem \ref{thm-sh-attr-cedb} lead, in a straightforward way, to the following theorem which is the main result of this section:
\begin{theorem}
\label{thm-counter-good}
Suppose all external nodes have associated subjoin nodes.
Suppose the subjoin has more than one maximal subtree. 
Then either  there is a parse tree with strictly fewer 
maximal subtrees in the subjoin  or the subjoin is not safe.
%
%
\end{theorem}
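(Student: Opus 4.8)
The plan is to assemble the final statement from the two structural theorems \ref{thm-counter-good1} and \ref{thm-counter-good2} together with the counterexample construction of Theorem \ref{thm-sh-attr-cedb}; as the surrounding text indicates, once those are in hand the argument is a short case analysis. First I would use the hypothesis that the subjoin has more than one maximal subtree to guarantee that at least one lowest maximal subtree exists: by Proposition \ref{pro-not-disturb}, any maximal subtree whose root has the greatest depth is a lowest maximal subtree, so I fix such a $T_1$.

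Next I would split on the dichotomy recorded just after Definition \ref{dfn-stem}, which is exhaustive for any lowest maximal subtree: the upper tip of the stem of $T_1$ is either a node of another maximal subtree, in which case $T_1$ is dependant, or it is an external node, in which case $T_1$ is not dependant. In the dependant case I invoke Theorem \ref{thm-counter-good1}, and in the non-dependant case Theorem \ref{thm-counter-good2}. Each of these theorems has exactly the same conclusion, namely a disjunction: either there is a parse tree with strictly fewer maximal subtrees, or $T_1$ leads to an n-set $B_S$.

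The first alternative is already one of the two outcomes asserted by the statement, so in that subcase I am done. In the remaining subcase $T_1$ leads to an n-set $B_S$; here the hypothesis of Theorem \ref{thm-sh-attr-cedb} is met verbatim (a maximal subtree $T_1$ and a set of attributes $B_S$ with $T_1$ leading to the n-set $B_S$), so that theorem yields that the subjoin is non-safe, which is the second asserted outcome. Since every branch terminates in one of the two permitted conclusions, the proof is finished.

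I do not expect the combination itself to present any difficulty; the real work is hidden inside Theorems \ref{thm-counter-good1} and \ref{thm-counter-good2}, where the reverse path transformation of Subsection \ref{subsec-reverse} (via Theorem \ref{thm-break-general}) is used to reduce the number of maximal subtrees when a break exists, and where the absence of a break is converted into an n-set. The only point requiring care in the assembly is confirming that the dichotomy dependant versus non-dependant is genuinely exhaustive for the chosen $T_1$, and that a lowest maximal subtree can always be selected; both are immediate from Proposition \ref{pro-not-disturb} and from the definition of the upper tip of the stem.
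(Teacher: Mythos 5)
Your proposal is correct and takes essentially the same route as the paper: the paper's own proof consists precisely of combining Theorems \ref{thm-counter-good1} and \ref{thm-counter-good2} with Theorem \ref{thm-sh-attr-cedb}, leaving the assembly implicit as ``straightforward.'' Your write-up simply makes explicit the two points the paper glosses over (existence of a lowest maximal subtree via Proposition \ref{pro-not-disturb}, and exhaustiveness of the dependant/non-dependant dichotomy), which is exactly the intended argument.
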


\section{Proof of the Main Theorem \ref{thm-main}}
We have two cases: 

a) There is an external relation with no associated subjoin relation. Then the subjoin is not safe according to Theorem \ref{thm-chase}.

b) All external relations have associated subjoin relations. Let $T$ be a parse tree with minimal number of  maximal subtrees. If $T$ has only one maximal subtree then the subjoin is safe according to Theorem \ref{thm-sh-attr-cedb1}. 
Otherwise the subjoin is not safe according to Theorem  \ref{thm-counter-good}.

%
%


\section{Subjoin-optimal Parse Tree}
\label{sec-min}
We present here an algorithm which, given an acyclic join and a subjoin, finds a parse tree of the join with the minimum number of maximal subtrees in the subjoin.

\subsection{Algorithm}
Let $p$ be a path in parse tree $T$ of acyclic join $J$.
Let $S$ be the maximal set of attributes that appear in all nodes of $p$ ($S$ could be empty). 
We consider the partial edges of the hypergraph of $J$  that are generated by $ALL-S$ (where $ALL$ is the set of all attributes in the  join $J$) and refer to the hypergraph thus constructed by $J_p$. 
We refer to the path $p$ after deleting from its nodes the attributes in $S$ (i.e., as it is viewed in $J_p$) as the {\em partial path } $p$.  

\begin{definition}
 \label{dfn-general breakk}
We consider a  partial path $p$ whose endpoints are nodes of two different maximal subtrees,  $T_1$ and $T_2$, of $T$ and all other nodes are non-subjoin nodes. 
We  have two cases: the partial path $p$ is connected or it is disconnected. In the second case we say that there is a {\em general break with respect to $T_1$ and $T_2$}. 
\end{definition}
When there is  a general break, then we  choose two nodes $u$ and $u'$ to define a  {\em general break point} as follows: These are nodes $u$ and $u'$ on $p$ that have a child-parent relationship on $T$, such that $u$ and $u'$ do not share any attributes in  patial path $p$.  
We say that the pair $(u,u')$ is a {\em general break point} wrto $T_1$ and $T_2$. 

Let $T$ be a parse tree of acyclic join $J$. 
An {\em arc} in $T$ joins a node of $T$ to its parent.
Let $e$ be an arc not in $T$. Let $e'$ be an arc in $T$ such that if both considered in $T$, there is a cycle containing both.
We produce parse tree $T_1$ which results form $T$ after adding $e$ and deleting $e'$. 
We define a {\em change} to be such a pair $(add, delete)$.
%

When there is a general break, we apply the following algorithm to obtain a parse tree when the subjoin has fewer maximal subtrees than in the original parse tree. 

{\bf Algorithm:}
%
%
%

Suppose there is a gneral break in given parse tree $T$ wrto $T_1$ and $T_2$. 
 We produce parse tree $T'$ from $T$ as follows: We use  the  reverse path transformation from Section
\ref{subsec-reverse}. We first apply the reverse path transformation  considering a maximal subtree $T_1$ and suppose  we delete arc $e_1$ and add $e'_1$, according to this transformation. Then we apply the operation of having the root of the maximal subtree $T_1$ as a child to one of the nodes of  maximal subtree $T_2$, i.e., we delete $e'_1$ and add another arc $e_2$ appropriately. This can be described as one change, i.e., we delete $e_1$  and add $e_2$.  We replace $T$ with $T'$ and repeat. We stop when there is no general break.

In Appendix \ref{app-newsec} we prove the following theorem:

\begin{theorem}
The algorithm always produces a parse tree with minimum number of maximal subtrees.
\end{theorem}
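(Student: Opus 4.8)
The plan is to prove two things: (1) that the algorithm terminates, and (2) that when it terminates the resulting parse tree has the minimum possible number of maximal subtrees. Termination is the easy part and I would dispatch it first. Each iteration of the loop produces a parse tree $T'$ in which the subjoin has strictly fewer maximal subtrees than in $T$; this is exactly the guarantee extracted from Proposition~\ref{pro-break-reverse} (and its generalization, Theorem~\ref{thm-break-general}), which is what the algorithm invokes at each step. Since the number of maximal subtrees is a nonnegative integer bounded above by the number of subjoin relations, it can strictly decrease only finitely many times, so the loop halts after at most $|S|$ iterations, where $S$ is the set of subjoin relations.

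The substance of the theorem is the optimality claim: when the loop exits, no general break exists with respect to any pair of maximal subtrees $T_1,T_2$, and I must show this forces the current number of maximal subtrees to be minimum. My approach is to argue the contrapositive via the characterization already developed in the paper. Suppose the algorithm has stopped at a parse tree $T$ with $k\ge 2$ maximal subtrees, so that for \emph{every} partial path joining two distinct maximal subtrees through external nodes, the partial path is connected (no general break). I would feed this situation into the machinery of Section~\ref{sec-counter-shared-sec}. Concretely, take a lowest maximal subtree $T_1$ (which exists by Proposition~\ref{pro-not-disturb}); by Theorems~\ref{thm-counter-good1} and~\ref{thm-counter-good2}, either there is a parse tree with strictly fewer maximal subtrees, or $T_1$ leads to an n-set $B_S$. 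The first alternative is impossible here, because those theorems obtain a reduction precisely through a break in the stem of $T_1$, and the no-general-break termination condition rules out exactly such breaks. Hence $T_1$ leads to an n-set, so by Theorem~\ref{thm-sh-attr-cedb} the subjoin is \emph{not safe}. But by Theorem~\ref{thm-main}, a non-safe subjoin admits \emph{no} parse tree realizing it as a single partial subtree; more to the point, if the subjoin were realizable with fewer maximal subtrees down to a single one, it would be safe, contradicting non-safety. Thus when $k\ge 2$ at termination, the subjoin is non-safe, so no parse tree achieves fewer than the minimum attainable count, and in particular $k$ is already minimal.

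The remaining, and most delicate, step is to make the last implication airtight: I need that the count $k$ produced by the algorithm equals the true minimum over \emph{all} parse trees, not merely that it cannot be reduced by the algorithm's local moves. The clean way is to dichotomize on safety. If the subjoin is safe, then Theorem~\ref{thm-main} guarantees some parse tree realizes it as a single partial subtree, i.e.\ the true minimum is $1$; I would then argue that the algorithm cannot terminate with $k\ge 2$ in the safe case, because the argument of the previous paragraph shows termination with $k\ge 2$ implies non-safety, a contradiction, so in the safe case the algorithm exits with $k=1$, matching the minimum. If the subjoin is \emph{not} safe, then by Theorem~\ref{thm-main} no parse tree realizes it as a single partial subtree, so the true minimum is at least $2$; since every parse tree with at least two maximal subtrees trivially has at least two, the minimum is exactly $2$, and I must confirm the algorithm terminates at $k=2$ rather than overshooting. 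This last point needs care: the local reductions only ever \emph{decrease} the count and stop exactly when no general break remains, so the exit value is whatever count first admits no general break; I would verify that this common exit value is forced to coincide with the true minimum in both safety regimes.

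The main obstacle I anticipate is precisely this coincidence between the local stopping criterion (``no general break'') and the global optimum. Showing the algorithm cannot get stuck at a suboptimal local minimum requires the converse direction: that whenever the current count exceeds the minimum, a general break must exist to enable a further reduction. I expect to establish this by contradiction exactly as sketched---absence of any general break forces, through Theorems~\ref{thm-counter-good1},~\ref{thm-counter-good2}, and~\ref{thm-sh-attr-cedb}, the existence of an n-set and hence non-safety, which via Theorem~\ref{thm-main} caps the minimum below and pins it down. Carefully threading the lowest-maximal-subtree selection (Proposition~\ref{pro-not-disturb}) and the sibling structure (Proposition~\ref{pro-AB}) so that the ``no break anywhere'' hypothesis actually rules out every reduction path the earlier theorems could exploit is the part I would write out most carefully.
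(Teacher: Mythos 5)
Your termination argument is fine, and the safe case is handled plausibly, but there is a genuine gap in the non-safe case that is fatal to the approach. Safety is a binary property: Theorem~\ref{thm-main} only tells you whether the minimum number of maximal subtrees is $1$ or at least $2$; it cannot distinguish a terminal count of, say, $5$ from a true minimum of $2$. Your claim that in the non-safe case ``the minimum is exactly $2$'' is false. Consider $J=R_0(A_1,\ldots,A_n)\bowtie R_1(A_1,B_1)\bowtie\cdots\bowtie R_n(A_n,B_n)$ with subjoin $R_1\bowtie\cdots\bowtie R_n$: each $A_i$ occurs only in $R_0$ and $R_i$, so in every parse tree each $R_i$ must be adjacent to $R_0$; the unique parse tree is the star centered at $R_0$, and the subjoin has exactly $n$ maximal subtrees in it. The true minimum is therefore $n$, for any $n\ge 2$. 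You explicitly defer the key point (``I would verify that this common exit value is forced to coincide with the true minimum in both safety regimes''), but no safety-based tool can supply that verification, because all values $\ge 2$ collapse to the single label ``non-safe.'' What is missing is exactly the converse you name as the obstacle: whenever the current count exceeds the minimum, a general break must exist; Theorems~\ref{thm-counter-good1}, \ref{thm-counter-good2} and~\ref{thm-sh-attr-cedb} are indifferent to the value of the count once it is at least $2$ and cannot yield this. (A secondary unverified step, affecting even your safe case: a ``break'' in a stem deletes all shared attributes, whereas a ``general break'' deletes only the attributes common to all nodes of the path; since the former set can be strictly larger, a stem break can exist without a general break, so ``no general break'' does not obviously block the reduction alternative of those theorems.)

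The paper closes the gap with a purely combinatorial argument about parse-tree transformations, never invoking safety. It defines a \emph{change} (delete one arc, add one arc so that the result is again a parse tree) and proves: a general break allows one change that strictly reduces the number of maximal subtrees (Theorem~\ref{thm-0}); if there is no general break, no single change can create one without increasing the count (Theorem~\ref{thm-2}), hence by Theorem~\ref{thm-1} no sequence of changes starting from a break-free tree ever reaches a tree with a general break; and without a general break no single change can reduce the count at all (Theorem~\ref{thm-n}). Since every parse tree of $J$ is reachable from the terminal tree by a sequence of changes, the terminal count is a global minimum whatever its numerical value. This quantification over all parse trees, rather than over the two safety regimes, is what your proposal needs and lacks; to salvage your route you would have to replace the safety dichotomy in the non-safe case by an argument of this kind.
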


\appendix

\section{Semijoins and fully reduced database}
\label{sec-semijoin}
A {\em semijoin statement} denotes a semijoin between two relations and is written $r_i\ltimes r_j$ where $r_i$ and $r_j$ are relations.
A {\em semijoin program} is a linear sequence of semijoin statements. This is an example of  a semijoin program:
\def\defeq{\mathrel{\mathop:}=}
\label{ex-semi}
$r_3\defeq r_3 \ltimes r_4;$
$~~~r_3\defeq r_3 \ltimes r_5;$
$~~~r_2\defeq r_2 \ltimes r_1.$

\begin{definition}
We say that a semijoin program {\em fully reduces} a database instance, if after the program is applied on the database instance, we obtain a new database instance which is
 consistent and produces the same output as when $J$ is applied on the original database instance.
\end{definition}

\begin{theorem}
\label{thm-semijoinprogram}
(\cite{BeeriFMY83} \cite{BernsteinG81})
If $J$ is an acyclic join, then there is a semijoin program that fully reduces every database instance.
\end{theorem}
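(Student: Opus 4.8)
The plan is to build the semijoin program directly from a parse tree $T$ of $J$ and to argue correctness from the connectedness property of Definition~\ref{dfn-acyclicc}, namely that for every attribute $A$ the nodes of $T$ containing $A$ form a connected subtree. First I would root $T$ at an arbitrary node $\rho$, so that every node has a well-defined parent and a subtree rooted at it. The program has two phases. In the \emph{upward phase} I list the nodes in an order placing every child before its parent, and for each node $r$ with children $c_1,\ldots,c_k$ I append $r \defeq r \ltimes c_1;\ \ldots;\ r \defeq r \ltimes c_k$. In the \emph{downward phase} I list the nodes top down, and for each non-root node $r$ with parent $p$ I append $r \defeq r \ltimes p$. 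This is a finite linear sequence of semijoin statements (two per arc of $T$), hence a legal semijoin program.

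Two invariants drive the proof. The first is free: a step $r \defeq r \ltimes r'$ deletes only tuples of $r$ that match no tuple of $r'$, i.e.\ tuples that cannot participate in any output of $J$. Therefore, throughout the program, $\pi_r(J(D)) \subseteq r$ for every relation $r$, and the value of $J(D)$ is never changed. This already supplies the ``same output'' half of the definition of a full reducer and one inclusion of consistency for free. The remaining task is the reverse inclusion $r \subseteq \pi_r(J(D))$ for every $r$, i.e.\ that after both phases no dangling tuples survive.

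For the upward phase I would prove, by induction on the subtree rooted at $r$ (children processed first), that once $r$ has been semijoined with all its children, every surviving tuple $t \in r$ extends to a tuple in the join of the relations in the subtree rooted at $r$. A leaf is trivial. For the inductive step, a surviving $t$ matches some tuple in each child $c_i$, and by the inductive hypothesis that matching tuple extends to a join $J_i$ of the subtree of $c_i$. The essential point is that, by the connectedness property, any attribute shared between $r$ and the subtree of $c_i$ must occur in $c_i$ itself, and any attribute shared between the subtrees of two distinct children $c_i, c_j$ must occur in their common ancestor $r$; hence, once the values on $r$ are fixed, the per-branch extensions $J_1,\ldots,J_k$ agree on all shared attributes and glue into a single join tuple of the whole subtree. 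Applying this at the root yields $\rho = \pi_\rho(J(D))$.

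The downward phase then pushes this property to every node. Processing top down, I would show by induction that after $r$ is semijoined with its parent $p$, every surviving tuple of $r$ extends to a full join tuple: by the upward phase $t$ already extends over the subtree below $r$, and since $p = \pi_p(J(D))$ after $p$ has been processed, matching $t$ with a tuple of $p$ supplies an extension over the part of $T$ above $r$. Once more the connectedness property forces these ``below'' and ``above'' extensions to share attributes only through $r$, so they combine into one tuple of $J(D)$, giving $r \subseteq \pi_r(J(D))$ and hence $r = \pi_r(J(D))$, which is exactly consistency. I expect the gluing argument---turning independent branchwise extensions into one globally consistent join tuple---to be the crux; it is precisely where acyclicity is used, and it is the step that fails for cyclic schemas.
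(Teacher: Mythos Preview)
Your argument is correct and is essentially the standard proof of this classical result. Note, however, that the paper does not give its own proof of this theorem: it is stated with citations to \cite{BeeriFMY83} and \cite{BernsteinG81}, and the paper only records the resulting \textbf{Procedure} \textsl{Semijoin} (a recursive leaf-first formulation equivalent to your two-phase bottom-up/top-down program), so there is nothing further to compare your proof against.
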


The program of the above theorem consists of $2n-2$ semijoin statements where $n$ is the number of relations in the join and is the following procedure:

{\bf Procedure} {\sl Semijoin}

Input: acyclic join $J$ and a parse tree $T$ for $J$; a database instance $D$.

Step 1: Let relation $r$  be a leaf of $T$ and $r'$ be the parent of $r$. Do
$r'\defeq r' \ltimes r$.

Step 2: Recursively we generate a consistent database $D'$ out of $D-r$ (i.e., the database that results from $D$ after we remove relation $r$).

Step 3: Add to $D'$ the semijoin $r\defeq r \ltimes r'$.

Theorem \ref{thm-semijoinprogram} is an ``if and only if,'' i.e., the following theorem is true:


\begin{theorem}
\label{thm-guarantee}
(\cite{BeeriFMY83} \cite{BernsteinG81})
If $J$ is  not an acyclic join, then no semijoin program is guaranteed to fully reduce  all relations in any database instance.
\end{theorem}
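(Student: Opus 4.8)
The plan is to prove the statement in its contrapositive form: assuming $J$ is not acyclic, I will exhibit a \emph{single} database instance $D$ that defeats every semijoin program at once, which is even stronger than what Theorem~\ref{thm-guarantee} asks and which complements the positive direction (Theorem~\ref{thm-semijoinprogram}). The engine of the argument is a fixed-point observation. A semijoin statement that overwrites $r_i$ with $r_i \ltimes r_j$ deletes a tuple $t \in r_i$ exactly when $t$ has no partner in $r_j$ agreeing on the shared attributes. Hence if $D$ is \emph{pairwise consistent}, then every tuple of $r_i$ already extends to a tuple of $r_i \bowtie r_j$, so it has such a partner and nothing is deleted. Since this holds for every ordered pair of relations, a pairwise consistent instance is a fixed point of every semijoin statement, and therefore of every semijoin program.

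First I would record this fixed-point lemma and draw the immediate consequence that reduces the theorem to a pure construction: it suffices to build, for a non-acyclic $J$, one database $D$ that is pairwise consistent but \emph{not} fully reduced, i.e. some relation contains a tuple that is dangling in the full join $J(D)$. Indeed, running any semijoin program on such a $D$ returns $D$ unchanged by the lemma, and $D$ still contains a dangling tuple, so no program fully reduces $D$; as $D$ is fixed in advance this simultaneously rules out every program.

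The crux, and the main obstacle, is constructing this witness from the hypothesis of non-acyclicity. The prototype is the triangle $J = R(A,B)\bowtie S(B,C)\bowtie T(A,C)$: taking $R=\{(1,1),(2,2)\}$, $S=\{(1,1),(2,2)\}$, $T=\{(1,2),(2,1)\}$ makes every pair of relations consistent, yet $J(D)=\emptyset$, so every tuple is dangling. For the general case I would invoke the structural characterization of acyclicity: $J$ is acyclic iff its primal (2-section) graph is chordal and its hypergraph is conformal, so if $J$ is not acyclic one of these fails. If chordality fails there is a chordless cycle $v_1, v_2, \ldots, v_k, v_1$ with $k \ge 4$, where consecutive vertices share an edge but no non-consecutive pair does; I would propagate values of a counter modulo $k$ around the cycle so that each consecutive (hence each pairwise-interacting) relation is locally consistent while closing the loop forces a contradiction, exactly as in the triangle. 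If conformality fails there is a vertex set whose pairs are each covered by an edge but which no single edge covers jointly, and an analogous ``each pair satisfiable, not all together'' population yields the witness. In both cases the remaining edges of $J$ are filled with the projections forced by their neighbours, so pairwise consistency is maintained throughout.

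The delicate points I expect to absorb most of the effort are twofold. First, showing the two failure modes are exhaustive and localizing the obstruction: for this I would lean on the GYO/Graham reduction, which halts on a nonempty subhypergraph containing no ear precisely when $J$ is not acyclic, and that residual core is what carries the cyclic contradiction while the ears pulled off earlier are populated to stay trivially consistent. Second, verifying that the local populations glue into a \emph{globally} pairwise consistent instance while the full join stays empty on the core; this is a bookkeeping argument that the modular/clique construction respects the projection equalities required by pairwise consistency yet admits no global assignment. Once $D$ is in hand, the theorem follows immediately from the fixed-point lemma, with no further reasoning about any particular program.
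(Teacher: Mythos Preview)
The paper does not prove Theorem~\ref{thm-guarantee}; it is stated in the appendix as a known result with citations to Beeri--Fagin--Maier--Yannakakis and Bernstein--Goodman, and no argument is given. So there is no paper proof to compare your proposal against.

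On its own merits, your plan is the standard one and is sound. The fixed-point observation is correct and immediate: if $D$ is pairwise consistent then every semijoin $r_i \ltimes r_j$ returns $r_i$ unchanged, so $D$ is invariant under every semijoin program, and exhibiting a single pairwise-consistent-but-not-globally-consistent instance defeats all programs at once. That reduction is exactly how the cited papers proceed, and the equivalence ``$\alpha$-acyclic $\Leftrightarrow$ every pairwise consistent instance is globally consistent'' is one of the items in the Beeri et al.\ list of characterizations. Your route via the chordal-plus-conformal decomposition is legitimate; the chordless-cycle case is handled cleanly by the modular counter, while the conformality-failure case with a minimal uncovered clique of size larger than three is the step where your sketch is thinnest and would need the most care (the triangle trick does not extend verbatim). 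The GYO-residual alternative you mention is closer to how the original references actually build the witness and is arguably cleaner for that case, so leaning on it rather than the conformal branch would tighten the proposal. You have already flagged the gluing of the core construction to the remaining relations as the main bookkeeping burden, and that assessment is accurate.
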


A {\em join expression} is a parenthesization of binary joins such that
$(r_1 \bowtie r_2) \bowtie (r_6 \bowtie r_3)$.

{\em Monotone join expression with respect to  a database instance:}
If every binary join that appears in the expression is over consistent relation instances.
A {\em Monotone join expression} is one that is monotone with respect to every pairwise consistent database instance.
%
%
%
A join is acyclic iff there is a monotone join expression.

\section{Examples}
\label{app-ex}
\subsection{Parse trees and safe subjoins}
Figure \ref{fig:diagram:wps5} shows an example of a parse tree.

\begin{figure}[ht]
\begin{center}
\includegraphics[height=19cm, angle=0, keepaspectratio=true,clip]{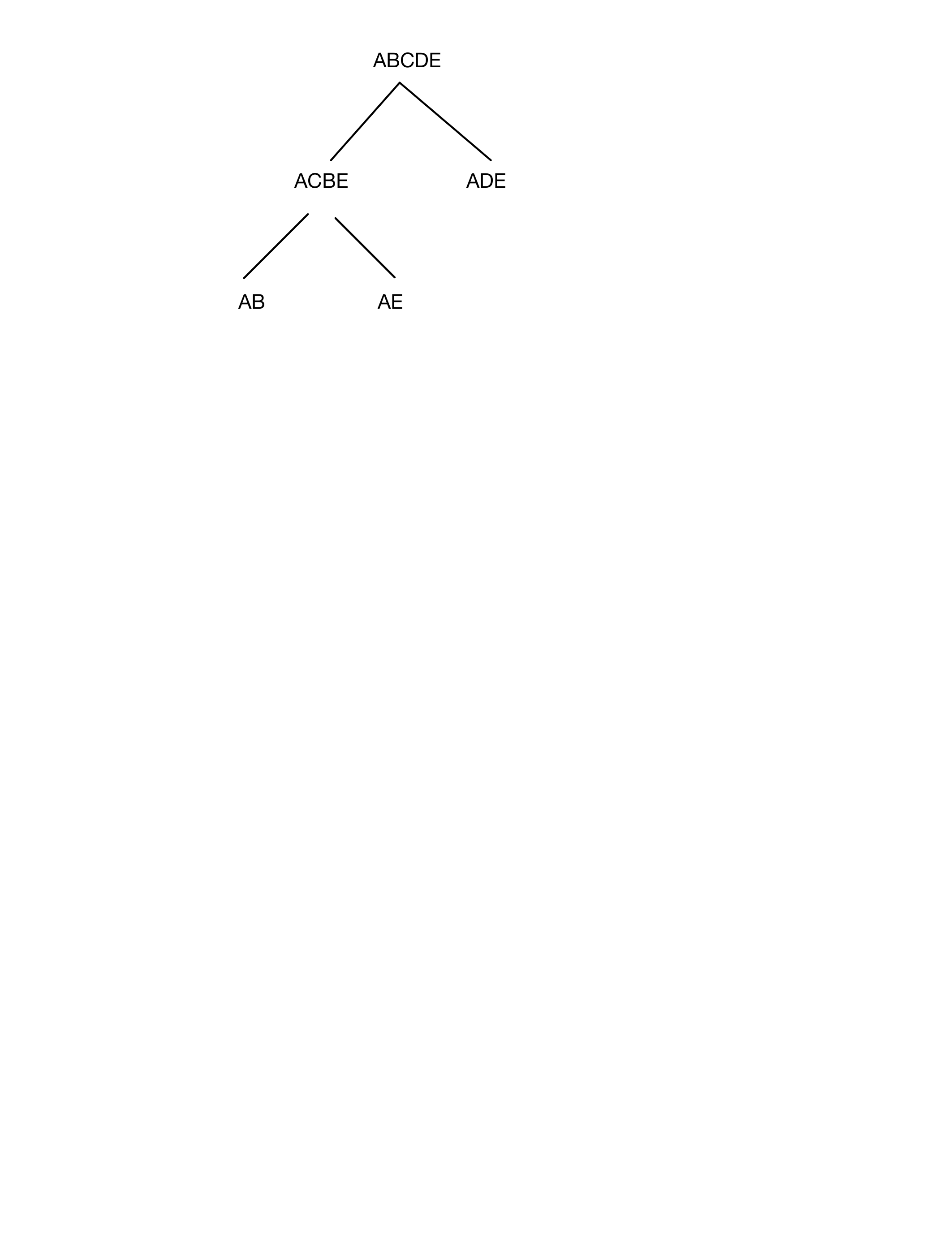}
\end{center}
\vspace*{-14.5cm}
\caption{A parse tree for the join $ABCDE\bowtie ACBE \bowtie  ADE  \bowtie AB \bowtie AE$. This join is acyclic. \label{fig:diagram:wps5}}
\end{figure}

In the following example we show that a nonsafe subjoin can be acyclic.
\begin{example}
\label{ex-acyclic-non-safe}
Consider the join  $$J=ABCDE\bowtie ACBE \bowtie  ADE  \bowtie AB \bowtie AEF\bowtie ABCDEF$$ Join $J$ is acyclic and contains as subjoin the following:
$$J_S=ABCDE\bowtie ACBE \bowtie  ADE  \bowtie AB \bowtie AEF$$
 The subjoin $J_S$ is acyclic with same parse tree (just replace leaf $AE$ with $AEF$) as in  Figure \ref{fig:diagram:wps5}. However the subjoin $J_S$
is not safe. This is proof that a non-safe subjoin can be acyclic.
\end{example}

\subsection{Joins with Projections}
\label{sec-projection}
The following examples shows how we find the attributes that are projected to the ouput of the subjoin and the complement subjoin.
\begin{example}
We consider the join $J=ABC\bowtie AB \bowtie ACE \bowtie BCF\bowtie FG$. This is an acyclic join. We project out in the output  the attributes $A$ and $E$. Now, we consider the subjoin $J_S=BCF \bowtie FG$. The projected attributes in the ouput of this subjoin is only the attribute $A$ because it appears in the output of $J$  and the boundary attributes, which are $B$ and $C$. 
Notice that  the other projected attribute, $E$, does not appear in the subjoin.

The complement subjoin is $J^c_S=ABC\bowtie AB \bowtie ACE$ and the attributes projected in the output are
$A,E$ and the boundary attributes $B$ and $C$. 
\end{example}

\subsection{Examples on Section  \ref{sec-warmup}}

Now we further elaborate on the example of Figure \ref{fig:diagram:wps5} to illustrate the arguments and results in Section  \ref{sec-warmup}. 

\begin{itemize}      

\item  In Figure \ref{fig:safe1}, we have listed all subjoins  that contain only two relations of the join in Figure \ref{fig:diagram:wps5}. We have uded  two columns, the first column contains the safe subjoins and the second column the nonsafe subjoins among those.

\item 
We consider the nonsafe subjoin from this list, $AB\bowtie AE$ and show that it is not nonsafe by constructing the counterexample database
that we described in Section  \ref{sec-warmup}. 
Notice that the shared attributes is only one, the attribute $A$.
We have two maximal subtrees here, each being one relation, and the attributes $B$ and $E$ comprise the set $B_S$ mentioned in the proof. 
Thus, the imaginary relation  $ABCDE$ contains the two tuples  $(00000)$ and $(01001)$. 
The two relations in the subjoin have the following tuples in the counterexample database: The relation $AB$ has the tuples $(01)$ and $(00)$ and the relation $AE$ has  the tuples $(01)$ and $(00)$. Thus, the  subjoin $AB\bowtie AE$ contains four tuples.

\item
We consider the nonsafe subjoin from the same list, $AB\bowtie ADE$. 
Notice that the shared attributes is only one, the attribute $A$.
Now the attributes in the set $B_S$ mentioned in the proof, are $B$, $D$ and $E$. Thus the imaginary relation $ABCDE$ contains the two tuples 
$(00000)$ and $(01011)$.
\end{itemize}

\begin{figure}

\begin{center}

\begin{tabular}{|c|c|c|}
safe subjoins & nonsafe subjoins\\
\hline
$AE \bowtie ADE $&$AB\bowtie ADE $ \\
\hline
 $ACBE\bowtie AB$& $ACBE \bowtie ADE$ \\
\hline
 $ACBE\bowtie AE$& $AB \bowtie AE$ \\
\hline
\end{tabular}

\end{center}
\caption{The subjoins with two relations of the join in Figure \ref{fig:diagram:wps5} categorized according to being safe or nonsafe.}
\label{fig:safe1}
\end{figure}

\subsection{Examples of the concept of stem and the concept of break}
\label{subsec-stem-break-examples}

{\sl Illustrating the concept of break:}
In figure \ref{fig:break}, we have two parse trees of two different joins. Both joins (and their corresponding parse trees) have three relations/nodes. The black nodes represent the subjoin considered in each case. So both have a subjoin with two nodes. Both subjoins have two maximal subtrees (their two nodes). The stem of the lowest maximal subtree consists of all three nodes shown. In both cases we have two shared attributes, the attribute $A$ and the attribute $B$. Now, in the case (a), there is no break because, if we delete $A$ and $B$ we are left with a path $(CD),(CE),(E)$ which is connected as a hypergraph path. In case (b), however, we are left with a path $(CD),(E),(EF)$ which is disconnected because nodes $(CD)$ and $(E)$  do not share an attribute (the sets $\{C,D\}$ and $ \{E\}$ are disjoint). Hence there is a break $(~ (ABE), (ABCD))~$ and the break point is the node labeled $ABE$. 

As a consequence of the break, in case (b), we can create another parse tree where the node labeled $ABEF$ is a child of the node labeled $ABCD$ and the node labeled $ABE$ is a child of the node labeled $ABEF$.

\begin{figure}[h]
\begin{center}
\includegraphics[height=19cm, angle=0, keepaspectratio=true,clip]{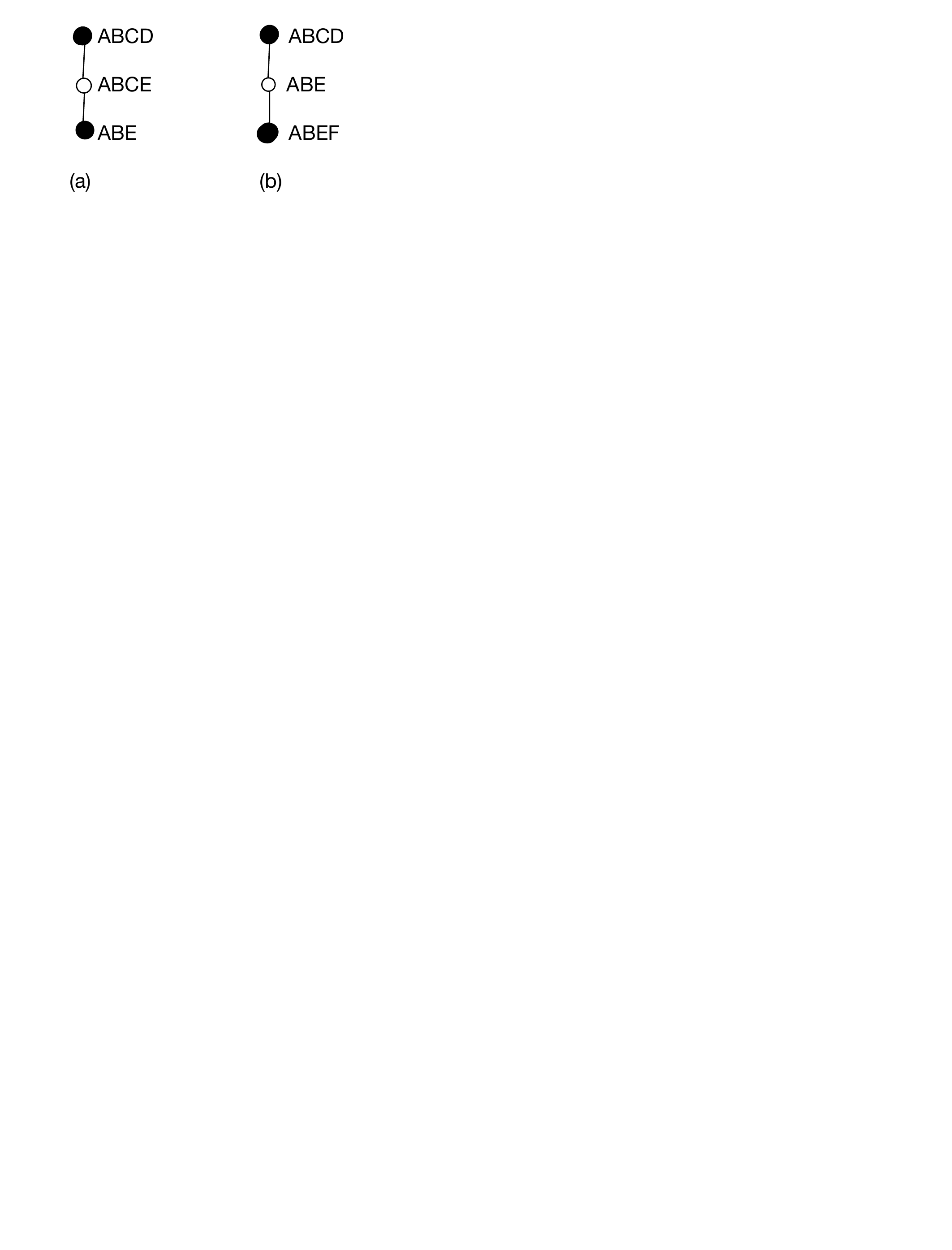}
\end{center}
\vspace*{-15.9cm}
\caption{ Illustrating the concept of {\sl break}. (a) has no break, (b) has a break.\label{fig:break}}
\end{figure}

\begin{figure}[h]
\vspace*{-3.9cm}
\begin{center}
\includegraphics[height=14cm, angle=0, keepaspectratio=true,clip]{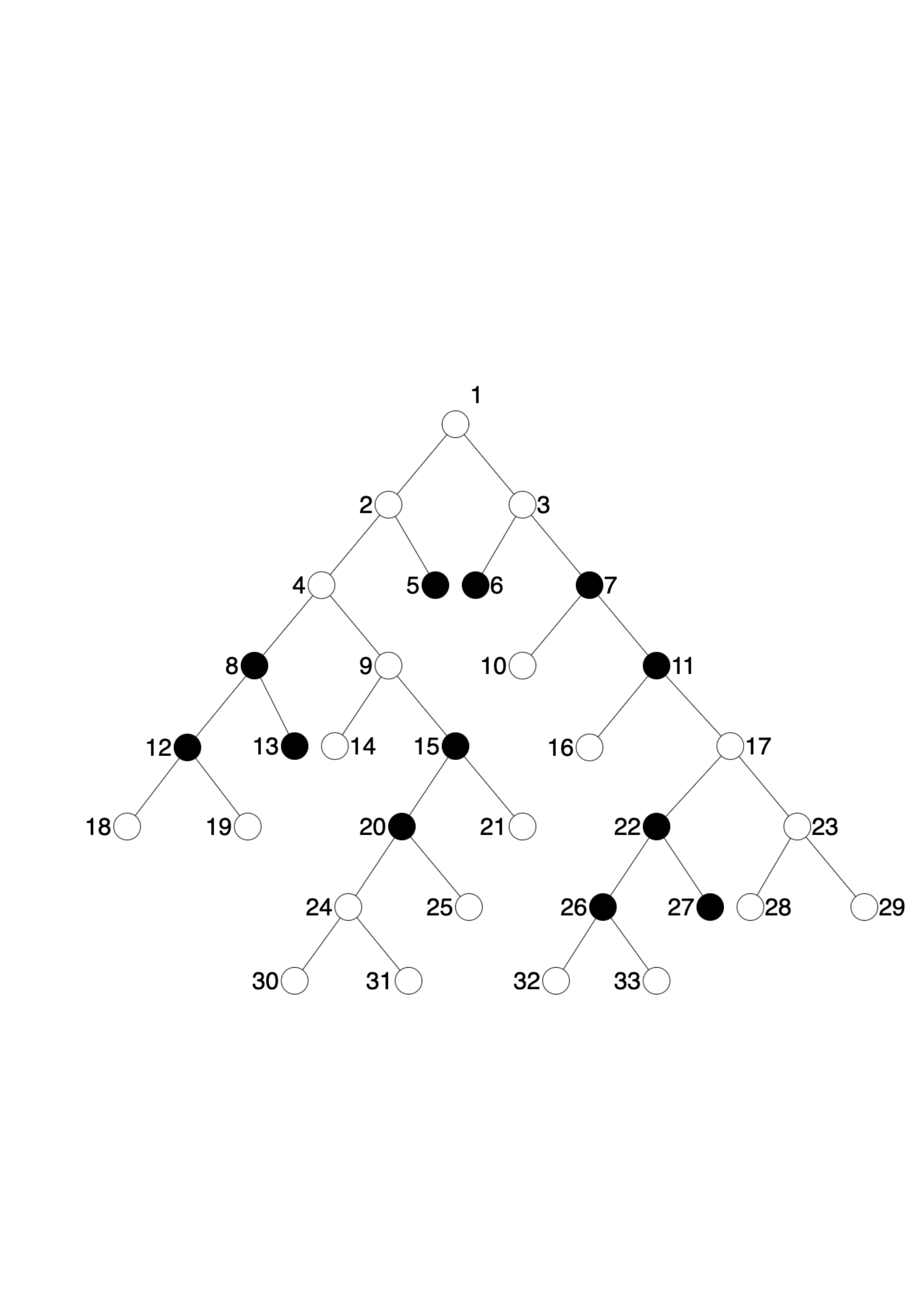}
\end{center}
\vspace*{-2.9cm}
\caption{ Illustrating the concept of {\sl stem}.\label{fig:tree}}
\end{figure}

{\sl Illustrating the concept of stem:}
In figure \ref{fig:tree}, 
we have a parse tree of some acyclic join $J$\footnote{Here the parse tree is binary, but,  in general a parse tree is not necessarily binary.
}   The  subjoin, $J_s$,  under consideration is marked with the
black nodes. There are six maximal subtrees which we list here: $T_1=\{ 5 \}$, $T_2=\{  6\}$, $T_3=\{ 7,11 \}$, $T_4=\{15,20  \}$, $T_5=\{8,12,13  \}$, $T_6=\{ 22,26,27 \}$. All are lowest maximal subtrees except $T_3$. $T_3$ is not a 
lowest maximal subtrees either of its nodes has a descendant that is the root of another maximal subtree (here it is node 22 which is the root of maximal subtree $T_6$.

For tree $T_5=\{8,12,13  \}$, the stem is $(8,4)$; it is not $(8,4,2)$ because node 4 has already a descendant that belongs to the subjoin (e.g., node 15).  For tree $T_4=\{15,20  \} $ the stem is $(15,9,4)$. For tree $T_6=\{ 22,26,27 \}$ the stem is $(22,17,11)$. 

Finally, we consider a third example to see both concepts of break and stem which are demonstrated in a trivial manner in the specific example. We consider a join which is a star, i.e., it is $J_{star}=R_1(A,B_1) \bowtie R_2(A,B_2) \cdots 
\bowtie R_n(A,B_n)$. It is easy to see that any tree with $n$ nodes can be the underlying tree for a parse tree of the join (notice the symmetry of the relations).  Thus, all subjoins of $J_{star}$  are safe.  Moreover, for any stem, there is a break.

\section{Proof of
Theorem \ref{thm-counter-good2}} 
\label{app-last}.
\begin{proof}

Here the upper tip of the stem of $T_1$ is an external node, hence, according to  Proposition \ref{pro-AB}, $T_1$ has a sibling, let it be $T_2$. Let $_1$ and $R_2$ be the roots respectively.
We denote $shared(N)$ all shared attributes of node/relation $N$. We have two cases: 

$shared(R_1)=shared(R_2)$. In this case, if there is a break in the stem of $T_1$, then the parent of the root  $R_1$ of $T_1$ will be $R_2$ according to Theorem \ref{thm-break-general}. If there is no break in neither stems then we consider $B_S$ to be the set of attributes in the stems of $T_1$ and $T_2$ after the removal of the shared attributes.  $B_S$ has the properties of  Definition \ref{dfn-leads-to}. 

$shared(R_1)\neq shared(R_2)$. Then, the lowest common ancestor of $R_1$ and $R_2$ contains all the shared attributes and since neither $R_1$ nor $R_2$ contains all of them, there is another maximal subtree, say $T_3$ (with root $R_3$) which does. Thus, if there is a break in the stem of $T_1$, then the parent of the root  $R_1$ of $T_1$ wil be $R_3$, according to Theorem \ref{thm-break-general}.  If there is no break in neither stems then we consider $B_S$ to be the set of attributes in the stems of $T_1$ and $T_2$ after the removal of the shared attributes.  $B_S$ has the properties of  Definition \ref{dfn-leads-to}. 
%
%
\end{proof}

\section{Proofs for Section  \ref{sec-min}}
\label{app-newsec}

\begin{theorem}
\label{thm-0}
If for parse tree $T$ there is an  general break then the algorithm produces in one change a parse tree $T'$ with strictly fewer maximal subtrees in the subjoin. 
\end{theorem}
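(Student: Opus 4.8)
The plan is to show that the single \emph{change} the algorithm performs, namely deleting the arc $(u,u')$ at the general break point and adding an arc from the root $R_1$ of $T_1$ to a suitable subjoin node $x_2$ of $T_2$, is simultaneously a legal parse-tree transformation and one that merges $T_1$ and $T_2$ into one maximal subtree while leaving every other maximal subtree intact. First I would fix notation: let $p$ be the partial path witnessing the general break between $T_1$ and $T_2$ (Definition~\ref{dfn-general breakk}), let $S$ be the maximal set of attributes common to every node of $p$, and let $(u,u')$ be a general break point with $u$ the parent of $u'$ and $u'$ on the $T_1$-side, so that $R_1$ lies in the subtree below $u'$; let $x_2$ be the endpoint of $p$ lying in $T_2$. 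I will argue directly about the net edge swap ``delete $\{u,u'\}$, add $\{x_2,R_1\}$'', which is precisely the composition of the reverse path transformation of Lemma~\ref{stem-lemma} (deleting $\{u,u'\}$ and adding $\{u,R_1\}$) with the re-parenting step (deleting $\{u,R_1\}$ and adding $\{x_2,R_1\}$). Since the $T$-path from $R_1$ to $x_2$ runs through $\{u,u'\}$, this add/delete pair forms a cycle and is therefore one legal change.

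The first key step is validity. Because $R_1,x_2,u,u'$ all lie on $p$, we have $S\subseteq R_1\cap x_2$ and $S\subseteq u\cap u'$. Removing $\{u,u'\}$ splits the underlying tree into a component $C$ containing $u'$ (hence $R_1$) and a component $M$ containing $u$ (hence $x_2$), and I would show the interface of $C$ and $M$ — the attributes occurring in both — is contained in $S$. Indeed, by acyclicity the node set of each attribute is a subtree of $T$, so an interface attribute $A$ has two occurrences whose connecting $T$-path must cross the unique edge $\{u,u'\}$; thus $A\in u\cap u'$, and the break condition $(u\setminus S)\cap(u'\setminus S)=\emptyset$ forces $A\in u\cap u'\subseteq S$. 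The same computation applied to each node $a_i$ on the sub-path from $u'$ to $R_1$ gives $u\cap a_i=S$, which is exactly the shared-attributes condition needed to invoke Lemma~\ref{stem-lemma}. Since every interface attribute already occurs in both $R_1$ and $x_2$, and each attribute's occurrences within $C$ (resp. $M$) form a subtree meeting $R_1$ (resp. $x_2$), re-attaching $C$ to $M$ through the new subjoin-to-subjoin edge $\{x_2,R_1\}$ keeps every attribute's node set connected; hence $T'$ is a parse tree of $J$.

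The second key step is counting maximal subtrees. The deleted edge $\{u,u'\}$ is never internal to a maximal subtree, since $u,u'$ are interior nodes of $p$ and hence external (the degenerate case $u'=R_1$ gives a subjoin-to-external edge, handled by omitting the reverse path transformation); removing such an edge therefore splits no maximal subtree. The added edge $\{x_2,R_1\}$ joins the subjoin node $x_2\in T_2$ to the subjoin node $R_1\in T_1$, so the partial subtrees of $T_1$ and $T_2$ fuse into a single connected partial subtree, i.e. one maximal subtree. Crucially, every other maximal subtree retains all of its internal subjoin-to-subjoin edges, because the swap only alters the connection of $C$ to the rest of the tree and re-rooting $C$ at $R_1$ changes parent pointers but not the undirected edge set; consequently no other subtree is split or merged, and $T'$ has exactly one fewer maximal subtree than $T$. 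I expect the main obstacle to be exactly the interface argument of the second paragraph — establishing that cutting at the break point isolates a component whose interface with the rest lies inside $S$ — together with the bookkeeping needed to certify that carrying along the external path nodes (and any maximal subtrees hanging off their side branches) never disturbs the remaining maximal subtrees; choosing $T_1$ to be a lowest maximal subtree (Proposition~\ref{pro-not-disturb}) makes this transparent, but the edge-preservation observation already secures it in general.
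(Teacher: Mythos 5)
Your proof is correct and follows essentially the same route as the paper's: it realizes the algorithm's two operations (the reverse path transformation of Lemma~\ref{stem-lemma} followed by re-parenting $R_1$ under a node of $T_2$) as a single delete/add change, and justifies validity by the same key observation --- that after cutting at the break point the two components share only attributes in $S$, which both endpoints of the new arc contain. You additionally spell out details the paper merely asserts (the shared-attributes condition along the path down to $R_1$, the interface-containment argument, and the explicit count of maximal subtrees), which strengthens rather than changes the argument.
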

\begin{proof}
The
reverse path transformation from Section \ref{subsec-reverse} guarantees that the  algorithm (i.e., delete arc $e_1$ and add $e'_1$) and the second operation  creates a parse tree. Observe that deleting $e_1$ leaves two disjoint parse trees (disjoint wrto the attributes) up to the set of attributes that appears along all nodes of the path that defines the general break. Thus $e_2$ is an arc that creates a parse tree since both its ends contain all attributes appearing along the path that defines the general break.
\end{proof}

%
%
%

%
%
%
We prove now that, if there is no general break, then, we cannot find a parse tree with smaller number of maximal subtrees than the current parse tree $T$. 

\begin{theorem}
\label{thm-1}
If there is a parse tree with a general break that is produced from a parse tree without a general break in a certain number of changes, then there is a (final) parse tree without a general break that produces a parse with a general break in one change.
\end{theorem}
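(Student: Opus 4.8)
The plan is to prove this by a discrete intermediate-value (``first crossing'') argument along the given sequence of changes. Write the hypothesized derivation as a chain $N = S_0, S_1, \ldots, S_m = B$, where $N$ is a parse tree without a general break, $B$ is a parse tree with a general break, and each $S_i$ is obtained from $S_{i-1}$ by a single change. The first thing I would record is that every $S_i$ is itself a legitimate parse tree of $J$: a \emph{change} is defined precisely as an arc exchange that produces a parse tree, so the whole chain consists of parse trees, and consequently the property ``has a general break with respect to $J_S$'' is well defined at each $S_i$.

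Next I would observe that this property is a plain Boolean attribute of each parse tree (given the fixed subjoin $J_S$), and that it disagrees at the two endpoints: it is \emph{false} at $S_0 = N$ and \emph{true} at $S_m = B$. In particular $m \ge 1$, since a chain of length zero would force $N = B$ and hence the same truth value at both ends. Let $j = \min\{\, i : S_i \text{ has a general break} \,\}$; this set is nonempty (it contains $m$) and excludes $0$, so $1 \le j \le m$. By minimality of $j$, the tree $S_{j-1}$ has no general break while $S_j$ does, and $S_j$ is obtained from $S_{j-1}$ by exactly one change. Taking $T' = S_{j-1}$ then yields a parse tree without a general break that produces a parse tree with a general break in one change, which is exactly the claim; the word ``final'' in the statement simply names $S_{j-1}$ as the last break-free tree before the first break occurs in the chain.

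The argument is elementary, so there is no computational obstacle; the only points that genuinely need care are the two structural facts just invoked, and I would state them explicitly. First, that each intermediate $S_i$ is a valid parse tree --- immediate from the definition of a change, but worth noting, since it is what makes ``general break'' meaningful all the way along the chain. Second, that the two endpoints carry opposite truth values of the break property, which is precisely the content of the hypothesis and is what guarantees the minimum $j$ is attained at a genuine internal transition rather than vacuously.

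Finally, I would spell out the intended use of this lemma in the surrounding development. Its purpose is to reduce the analysis of arbitrary-length change sequences to a \emph{single} change: once we have a break-free tree $T'$ that reaches a break tree in one change, we may combine this with Theorem~\ref{thm-0} and the reverse-path transformation of Subsection~\ref{subsec-reverse} to examine that single change concretely, and thereby control how the number of maximal subtrees can evolve when starting from a tree with no general break.
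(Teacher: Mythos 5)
Your proposal is correct: the first-crossing argument along the chain of changes is exactly the reasoning that makes this statement true, and it is the same (one-line) idea the paper relies on --- indeed the paper's own proof is simply the word ``Evident,'' so your write-up is a careful spelling-out of precisely that intended observation.
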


\begin{proof}
Evident
\end{proof}

\begin{theorem}
\label{thm-2}
If there is no general break, then we cannot find with one change a parse tree with a general break and simultaneously retaining the same or smaller number of maximal subtrees in the subjoin.
\end{theorem}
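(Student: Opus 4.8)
Looking at Theorem~\ref{thm-2}, I need to prove that if a parse tree $T$ has no general break, then no single change (an add/delete arc pair that produces a valid parse tree) can simultaneously create a general break and not increase the number of maximal subtrees in the subjoin. This is the converse-flavored statement that justifies the stopping criterion of the algorithm: once no general break exists, we are at a local (and, by Theorem~\ref{thm-1}, global) optimum.

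\medskip

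The plan is to argue by contradiction. Suppose $T$ has no general break, yet a single change $(add\ e, delete\ e')$ produces a parse tree $T'$ that has a general break with respect to two maximal subtrees $T_1$ and $T_2$, while the number of maximal subtrees in $T'$ is no larger than in $T$. First I would recall that a single change only deletes one arc $e'$ and adds one arc $e$; since both trees are parse trees of the same acyclic join $J$, the attribute-connectivity condition of Definition~\ref{dfn-acyclicc} must hold in each. The key structural fact I would extract is that the partial path $p$ witnessing the general break in $T'$ runs between a node of $T_1$ and a node of $T_2$ through external nodes only, and becomes disconnected once the maximal shared-attribute set $S$ of $p$ is removed. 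I would then trace this path back into $T$: because $T$ and $T'$ differ by exactly one arc, the path $p$ in $T'$ either already existed in $T$ (if it avoids the added arc $e$) or it uses $e$ exactly once.

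\medskip

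The main case analysis would be on whether the break path $p$ in $T'$ uses the newly added arc $e$. If it does not use $e$, then $p$ is a path present in $T$ as well (its arcs are all retained arcs), so the same disconnection after deleting $S$ would already constitute a general break in $T$ with respect to $T_1$ and $T_2$ --- contradicting the hypothesis that $T$ has no general break, provided $T_1$ and $T_2$ are still separate maximal subtrees in $T$. Here I would use the assumption that the number of maximal subtrees did not decrease: if $T_1$ and $T_2$ were a single maximal subtree in $T$ that got split by the change, then the change would have \emph{increased} the count (since removing an arc can only split one subtree into at most two), which would contradict the ``same or smaller'' hypothesis unless some other pair of subtrees merged, and merging subtrees in $T'$ relative to $T$ requires the added arc to connect subjoin nodes --- a situation I would rule out or reduce to the other case. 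If $p$ \emph{does} use $e$, then deleting $e$ from $T'$ (equivalently, looking in $T$) breaks $p$ into two subpaths, and I would argue that at least one of these subpaths, suitably extended through the arc $e'$ that was deleted from $T$ in forming $T'$, yields a general-break witness already in $T$, again contradicting the no-general-break hypothesis.

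\medskip

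The hardest part will be the bookkeeping on maximal subtrees: I must show that creating a new general break in $T'$ between two maximal subtrees forces either an already-existing general break in $T$ or a strict increase in the number of maximal subtrees. The delicate point is that a single change can both split one maximal subtree and merge two others, so the count is not monotone in an obvious way; I would handle this by observing that the added arc $e$, to merge two subjoin components, must join two subjoin nodes, in which case the break path $p$ in $T'$ cannot pass through external nodes only across $e$ (violating the definition of general break in Definition~\ref{dfn-general breakk}), forcing $p$ to avoid $e$ and reducing to the first case. This structural observation --- that the arc enabling a general break in $T'$ is necessarily an arc already present in $T$ along an external-only path --- is the crux, and once it is established the contradiction with the no-general-break assumption on $T$ follows directly.
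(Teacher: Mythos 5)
There is a genuine gap, and it sits exactly where the theorem's content lies: your second case, in which the break path $p$ of $T'$ uses the newly added arc. Your ``crux'' --- that the arc enabling a general break in $T'$ must already be present in $T$ --- is not established by your reasoning and is false as a standalone structural fact. Your justification only excludes an added arc joining two subjoin nodes; an added arc with an external endpoint can perfectly well carry the break path. For instance, in the star join $R_1(A,B_1)\bowtie\cdots\bowtie R_n(A,B_n)$ with subjoin $\{R_1,R_2\}$, starting from the star centered at $R_1$ (one maximal subtree, hence no general break), deleting $(R_1,R_2)$ and adding $(R_3,R_2)$ yields a parse tree with a general break along $R_2,R_3,R_1$ that uses the added arc. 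This does not contradict the theorem (the number of maximal subtrees grows from one to two), but it shows your structural claim cannot hold without invoking the count hypothesis, which your argument for the crux never does. To close this case you need two things that are absent: (i) use the count hypothesis to conclude that the \emph{deleted} arc has an external endpoint --- since an added arc lying on $p$ cannot join two subjoin nodes, it cannot merge maximal subtrees, so an internal deleted arc would strictly increase the count; and (ii) an attribute-level argument, which is the heart of the paper's proof and entirely missing from yours: because the endpoints of the added arc are connected in $T$ by the tree path passing through the deleted arc, the condition of Definition~\ref{dfn-acyclicc} forces every attribute shared by the added arc's endpoints to appear in both endpoints of the deleted arc; only this lets one transfer the disconnection property of the partial path back to $T$ and exhibit the deleted arc as a general break point there. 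Without arguing about attributes, ``suitably extending subpaths through the deleted arc'' cannot establish the disconnection-after-removing-shared-attributes property that the definition of a general break requires.

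Your first case is essentially right, but for a cleaner reason than the one you give, and your bookkeeping there is circular. If $p$ avoids the added arc, then $p$ is a path of $T$; since paths in a tree are unique and $p$ must contain an external node (a two-node $p$ would make its endpoints adjacent subjoin nodes of $T'$, hence a single maximal subtree), its endpoints cannot lie in one maximal subtree of $T$, so $p$ witnesses a general break in $T$ outright --- no counting needed. By contrast, your handling of the ``split plus merge'' scenario argues that a merging added arc must join two subjoin nodes, ``forcing $p$ to avoid $e$ and reducing to the first case''; but in this case $p$ avoids $e$ by assumption, so that observation produces no contradiction and the scenario is never actually ruled out. Note that the paper sidesteps both issues by casing on the deleted arc (subjoin--subjoin versus external endpoint) rather than on the break path, doing the attribute argument in the latter case.
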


\begin{proof}

Suppose we have parse tree $T$ without a general break and after one change we obtain parse tree $T'$ with a general break with respect to maximal subtrees $T_1$ and $T_2$. 
Suppose edge $e$ is in $T$ but not in $T'$ and edge $e'$ is in $T'$ but not in $T$ and it is the edge that introduces the general break. Now consider the set $S$ of attributes that are shared between the two edges/nodes of $e'$. 
We have two cases as follows:

In the first case we assume that $e$ joins two nodes in the subjoin. After deleting it, we have more maximal subtrees in $T'$.

In the second case, we assume that one of the ends of $e$ is a non-subjoin node. Then we will  showthat $e$ defines a general break point  with respect to maximal subtrees $T_1$ and $T_2$. Since $e'$ is not in $T$, all the attributes that are shared by the endpoints of $e'$ should appear in the same connected component of $T$, hence, they should appear in the endpoints of $e$ too, since the edges $e$ and $e'$ are on a cycle (if both are included), and hence this is the only way to satisfy the condition of the definition of acyclicity.
%
%
%
\end{proof}

\begin{theorem}
\label{thm-n}
If there is no general break then we cannot find with one change (i.e., add arc, delete arc) a parse tree with fewer maximal subtrees. 
\end{theorem}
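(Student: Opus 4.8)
The plan is to argue by contradiction: I assume $T$ has no general break and that some single change, adding an arc $e'$ and deleting an arc $e$, produces a parse tree $T'$ with strictly fewer maximal subtrees, and I then exhibit a general break in $T$. The first step is a bookkeeping identity for how a change affects the count. Write $c(T)$ for the number of maximal subtrees of the subjoin in $T$; this equals the number of connected components of the subgraph $T[S]$ of $T$ induced on the subjoin nodes $S$. For an arc, let $\delta=1$ if both its endpoints lie in $S$ and $\delta=0$ otherwise. Since $T[S]$ and $T'[S]$ are forests differing only by the possible removal of $e$ and addition of $e'$, comparing edge counts gives $c(T')=c(T)+\delta_e-\delta_{e'}$. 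Hence $c(T')<c(T)$ forces $\delta_e=0$ and $\delta_{e'}=1$, and in fact $c(T')=c(T)-1$. This makes it unnecessary to split on whether $T'$ itself has a general break; the attribute argument below is exactly the one used in the second case of the proof of Theorem~\ref{thm-2}.

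So a count-reducing change must delete an arc $e$ with an external endpoint and add an arc $e'=(u,v)$ joining two subjoin nodes; moreover $u$ and $v$ must lie in two distinct maximal subtrees $T_1,T_2$, since otherwise $e'$ would close a cycle inside the forest $T'[S]$. Let $P$ be the unique path from $u$ to $v$ in $T$. Because $T'=T-e+e'$ is again a tree, $e$ lies on $P$; write $e=(a,b)$ with $b$ external.

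The next step pins down the attributes by the running-intersection property of parse trees. Let $S_P$ be the set of attributes common to every node of $P$. Every attribute of $u\cap v$ occurs at both endpoints of $P$ and hence, by acyclicity, at all its nodes, so $u\cap v\subseteq S_P$; conversely $S_P\subseteq a\cap b$ because $a,b$ lie on $P$. Finally, for $T'$ to be a parse tree, every attribute of $a\cap b$, whose occurrences are split when $e$ is deleted, must be carried across by the new arc, i.e. $a\cap b\subseteq u\cap v$. Chaining these inclusions yields $a\cap b=u\cap v=S_P$.

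Finally I would exhibit the general break from the edge $(a,b)$. Take the shortest sub-path $Q$ of $P$ that contains $(a,b)$, has both endpoints in $S$, and has all interior nodes external; such a $Q$ exists because $b$ is external, and its endpoints lie in two different maximal subtrees, since the path joining them has an external interior node ($b$) and so cannot stay within a single maximal subtree. Let $S_Q$ be the attributes common to all nodes of $Q$; as $Q\subseteq P$ we have $S_P\subseteq S_Q$, so $a\cap b=S_P\subseteq S_Q$ and, after deleting $S_Q$, the nodes $a$ and $b$ share no attribute. Thus the partial path $Q$ is disconnected at $(a,b)$, which by Definition~\ref{dfn-general breakk} is a general break in $T$ with respect to those two maximal subtrees, contradicting the hypothesis. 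The main obstacle is precisely this last piece of attribute bookkeeping: one must check that the break persists when passing from the global common set $S_P$ to the possibly larger local set $S_Q$, and that the chosen $Q$ genuinely has external interior with endpoints in distinct maximal subtrees; the counting identity and the parse-tree argument for $a\cap b=u\cap v=S_P$ are then routine.
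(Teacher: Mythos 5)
Your proof is correct and follows essentially the same route as the paper's: assume a single change reduces the number of maximal subtrees, show the added arc must join two distinct maximal subtrees while the deleted arc lies on the induced cycle and has an external endpoint, then use the running-intersection property of parse trees to show the deleted arc is a general break point, contradicting the hypothesis. If anything, your write-up is tighter than the paper's: the counting identity $c(T')=c(T)+\delta_e-\delta_{e'}$, the chain $a\cap b=u\cap v=S_P$, and the explicit sub-path $Q$ with subjoin endpoints and all-external interior nodes (needed to meet the definition of a general break) make precise several steps that the paper only asserts.
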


\begin{proof}
Suppose there is no general break and we can find with one change a parse tree  $T_n$ with fewer maximal subtrees. Then, this means that there are  two nodes $N_a$ and $N_b$, each from different maximal subtree in $T$ (say subtrees $T_1$ and $T_2$) such that they have a parent/child relationship in $T_n$, let us call $e$ the arc that denotes this parent/child relationship. (Arc $e$, thus, appears in $T_n$ but not in $T$.) If we add $e$ in $T$, we will create a cycle, thus, in $T_n$ some arcs of this cycle are not present. Suppose $e'$ is such an arc that is not present in $T_n$. For the property that each attribute must appear in a connected part of any parse tree to hold in $T_n$, $e'$ should connect two nodes $N_1$ and $N_2$ such that their shared attributes (i.e., $N_1\cap N_2$)
appear along the path $p$ from $N_1$ to $N_2$, where $p$ is formed by the following paths  in $T$: a path  from $N_a$ to $N_1$ and a path from $N_b$ to $N_2$, and arc $e$. This can only happen if there is a general break  
$(N_1,N_2)$ with respec to  $T_1$ and $T_2$, i.e., when $N_1\cap N_2 - \{SharedAttributes\}$ is empty (where $\{SharedAttributes\}$ is the set of all attributes shared by $N_1$ and $N_2$). Because otherwise, the set  $N_1\cap N_2$ contains attributes that are not shared between any two of the maximal subtrees, and, hence, an arc should exist  in $T_n$ that makes connected all the nodes that a certain such attribute appear. Such an arc however will create  cycle with the arc $e$ because it will create another path from $N_1$ to  $N_2$ in $T_n$. 
%
%
 \end{proof}

Putting  it all together, Theorem~\ref{thm-0} says that , if there is a general break, then we can find a parse tree with strictly fewer maximal subtrees in the subjoin. Theorem~\ref{thm-1} with  Theorem~\ref{thm-2} imply that if there is no break then we cannot find a break after any number of changes. And Theorem~\ref{thm-n} concludes by 
saying that the only way to find a parse tree with strictly fewer maximal subtrees is by using a general break.

\end{document}